\documentclass[10pt,english]{article} 

\usepackage{amsmath, amsthm, amssymb}
\usepackage{a4, babel, graphicx, listings, fancyhdr, verbatim}
\usepackage{mathrsfs} 
\usepackage{anysize} 
\usepackage{natbib} 
\usepackage[format = hang]{caption}                              

\usepackage{xcolor} 

\usepackage{a4}
\usepackage[latin1]{inputenc}
\usepackage{amsmath, amsthm}
\usepackage{graphicx}
\usepackage{amssymb}
\usepackage{verbatim}
\usepackage{listings}
\usepackage{alltt}
\usepackage{babel}
\usepackage{mathrsfs}
\usepackage{ae}
\usepackage{natbib}
\usepackage{booktabs} 

\newtheoremstyle{theorem}
{10pt} 
{10pt} 
{\sl} 
{\parindent} 
{\bf} 
{. } 
{ } 
{} 

\theoremstyle{theorem}

\newtheorem{proposition}{Proposition}[section]

\newtheorem{assumptions}{Assumptions}

\def\real{$\rm I\!R$}
\def\newsquare{{\ \vrule height0.5em width0.5em depth-0.0em}}

\def\beq{\begin{eqnarray}}
\def\eeq{\end{eqnarray}}

\def\beqn{\begin{eqnarray*}}  
\def\eeqn{\end{eqnarray*}}

\def\E{{\rm E}}
\def\Var{{\rm Var}}

\def\cov{{\rm cov}}
\def\dd{{\rm d}}
\def\N{{\rm N}}

\def\Pr{P}
\def\pr{{\rm pr}}

\def\arr{\rightarrow}
\def\hatt{\widehat}
\def\tilda{\widetilde}

\def\sumin{\sum_{i=1}^n}

\def\rootn{\sqrt{n}}

\def\midd{\,|\,}
\def\tr{{\rm t}}
\def\dell{\partial}

\def\newsquare{{\ \vrule height0.5em width0.5em depth-0.0em}}

\numberwithin{equation}{section} 
\numberwithin{figure}{section}
\numberwithin{table}{section}
\usepackage{hyperref}

\title{The partly parametric and partly nonparametric \\ 
   additive risk model}

\def\idag{{August 2021}}
\date{\idag}

\begin{document}


\maketitle

\centerline{\large\bf Nils Lid Hjort$^1$ and Emil Aas Stoltenberg$^2$}

\medskip 
\centerline{\bf $^1$Department of Mathematics, University of Oslo}
\centerline{\bf $^2$BI Norwegian Business School, Oslo}

\begin{abstract}
\noindent
Aalen's linear hazard rate regression model 
is a useful and increasingly popular alternative to 
Cox' multiplicative hazard rate model. 
It postulates that an individual has hazard rate function 
$h(s)=z_1\alpha_1(s)+\cdots+z_r\alpha_r(s)$
in terms of his covariate values $z_1,\ldots,z_r$.
These are typically levels of various hazard factors, 
and may also be time-dependent.  
The hazard factor functions $\alpha_j(s)$ are the parameters 
of the model and are estimated from data. This is traditionally
accomplished in a fully nonparametric way. This paper
develops methodology for estimating the hazard factor functions
when some of them are modelled parametrically while
the others are left unspecified.
Large-sample results are reached inside this partly parametric,
partly nonparametric framework, which also enables us 
to assess the goodness of fit of the model's parametric 
components. In addition, these results are used to pinpoint how much precision is gained, using the parametric-nonparametric model, 
over the standard nonparametric method. A real-data application is included, along
with a brief simulation study. 
\end{abstract}

\noindent {\it Key words:}
counting processes,
event history, 
goodness of fit processes, 
linear hazard regression model,
semiparametric


\section{Introduction and summary}
\label{section:intro} 

Suppose individual $i$ has observable covariate values 
$z_{i,1},\ldots,z_{i,r}$
and that these are thought to influence the probability 
distribution of his life time $T_i$. 
The most usual way of modelling this is through Cox' 
regression model for the hazard rate $h_i(s)$, 
which takes this to be of the form 
$h_0(s)\,\exp(\beta_1z_{i,1}+\cdots+\beta_rz_{i,r})$
for certain parameters $\beta_1,\ldots,\beta_r$. 
Aalen's linear hazard rate regression model 
has over the past few decades become 
a useful and popular alternative. 
It postulates that individual $i$ has hazard rate function 
\beq
\label{eq:basichazard}
h_i(s)=h(s\midd z_i)
        =z_{i,1}\alpha_1(s)+\cdots+z_{i,r}\alpha_r(s)
        =z_i^\tr\alpha(s), 
\eeq 
where the $\alpha_j(s)$ functions are unknown. 
The observed data comprise triples $(T_i,\delta_i,z_i)$,
for individuals $i=1,\ldots,n$, where $\delta_i$ is 
an indicator for non-censoring. 
See \citet{Aalen80,Aalen89,Aalen93} and the relevant chapters 
of the classic monographs \citet*[Ch.~VII]{ABGK93} 
and \citet*[Ch.~VI]{ABG08} 
for general discussion of the (\ref{eq:basichazard}) model,
for the most usual estimation methods and their properties, 
and for applications to various datasets. 
We comment below on various extensions of 
and further developments for the basic 
Aalen model (\ref{eq:basichazard}). 
The present paper is yet another contribution 
to this literature, taking some of the regressor functions
parametric and the others nonparametric.

One may think of $z_j=z_{i,j}$ as the level of 
hazard factor no.~$j$ for the individual, and the $\alpha_j(s)$ 
function as the associated hazard factor function, 
or regressor function. Often, the first covariate is the constant $1$, 
and the others are scaled such that zero is the minimum value when the covariate is discrete, or the mean value when the covariate is continuous, in which case one typically also scales the covariate by the inverse of the empirical standard deviation. In such cases equation (\ref{eq:basichazard}) 
models hazard rate as the common 
$\alpha_1(s)$ plus excess contributions due to hazard factors 
$z_2,\ldots,z_r$.  
The covariates may also depend upon time as long as they do so in 
a previsible or predictable fashion; 
the covariate values $z_i(s)$ at time $s$ should be known
just prior to time $s$. It suffices that the $z_i(s)$ are 
left-continuous functions of what has been observed on $[0,s]$,
i.e., they must not depend on information becoming available after $s$. 

Importantly, the Aalen additive model is typically 
estimated nonparametrically, where there are no further assumptions 
beyond positivity and continuity of $z_i^\tr\alpha(s)$
of (\ref{eq:basichazard}) for all $z_i$ in the support 
of the distribution of covariates. 
For the typical application, nonparametric estimates 
of the cumulative hazard factor functions 
\beq
\label{eq:theAj}
A_j(t)=\int_0^t\alpha_j(s)\,\dd s \quad 
   {\rm for\ }j=1,\ldots,r 
\eeq 
are computed and displayed, supplemented with variability estimates.
This is used to suggest conclusions about relative influence
over time of the different covariate factors. The survival curves for given 
individuals may also be read off from the modelling
here, and if an individual has covariate vector 
$z=(z_1,\ldots,z_r)^\tr$, not changing over time, 
the survival curve is 
\beq
\label{eq:survivalStz} 
S(t\midd z)=\exp\{-z^\tr A(t)\}
   =\exp\{-z_1A_1(t)-\cdots-z_rA_r(t)\}. 
\eeq 

There has been considerable further research, 
extending and finessing aspects of the basic Aalen model 
(\ref{eq:basichazard})--(\ref{eq:survivalStz}), 
see e.g.~\citet{MartinussenScheike07,
martinussen2002efficient,martinussen2002flexible,
martinussen2009covariate,martinussen2009additive}. 
\citet{McKeaugeSasieni94} studies a version where 
some of the $\alpha_j(s)$ functions are taken constant,
the other taken nonparametric; the present paper 
extends these ideas and methods further. 
\citet{stoltenberg2020standard} studies the Aalen model 
in the presence of a cure fraction.
\citet{Borganetal07} extend certain features of
the model to encompass recurrent event data
and to reflect between-subject heterogeneity and missing data. 
Also of relevance for the present paper, \citet{JullumHjort17} 
develop general model selection methods for choosing among 
parametric and nonparametric candidate models; 
and \citet{JullumHjort19} study the possible efficiency gains 
in specifying a parametric baseline hazard 
in the Cox regression model. 

In applications, the researcher might have firm prior opinions about the 
functional form of the effect of certain covariates, while being less informed
about others. This motivates a framework where some of the hazard factor
functions, say the first $p$, are specified parametrically,
while the remaining $q=r-p$ continues to be left unspecified,
beyond the basic requirement that the (\ref{eq:basichazard}) 
quantity is nonnegative across all expected covariate values, 
for all time $s$. Writing $z_{i,(1)}$ for the first 
$p$ components and $z_{i,(2)}$ for the remaining $q$ of $z_i$, 
with a similar block division of $\alpha(s)$ into 
$\alpha_{(1)}(s,\theta)$ and $\alpha_{(2)}(s)$, 
the model becomes
\beq
h_i(s)=z_{i,(1)}^\tr\alpha_{(1)}(s,\theta)
   +z_{i,(2)}^\tr\alpha_{(2)}(s)
   =\sum_{j=1}^p z_{i,j}\alpha_j(s,\theta)
    +\sum_{j=p+1}^{p+q} z_{i,j}\alpha_j(s)
\label{eq:thehi}
\eeq 
for $i=1,\ldots,n$. Here $\theta$ is the collection of parameters 
used to describe the first $p$ hazard factor functions, 
which would typically take the form 
$\alpha_j(s,\theta_j)$ for $j=1,\ldots,p$. 

The covariates in~\eqref{eq:thehi} are not dependent on time. As discussed in relation to the Aalen model of~\eqref{eq:basichazard}, an extension to time-varying covariates requires only minor modifications to the theory related to predictability and linear independence of the the covariates at all time points. To ease the presentation we stick to covariates that are constant in time.

Our quest is two-fold. We aim first at developing sound estimation 
methods for the unknowns of the \eqref{eq:thehi} model, 
along with large-sample theory describing the behaviour 
of these estimators. Secondly, accompanying goodness-of-fit 
measures will be constructed, to assess the adequacy of the
parametric components. 
To reach these goals our paper proceeds as follows. 

We start in Section \ref{section:generalnonparaandpara}
by presenting the natural methods and results for 
the purely nonparametric and the purely parametric versions
of model (\ref{eq:thehi}), before going on to 
our favoured estimation strategies for the 
cases with both parametric and nonparamertric components 
Section \ref{section:basics}. 
In Section \ref{section:largesampletheory} we derive 
the required large-sample normality results, 
for both the parametric and nonparametric parts, 
enabling statistical inference. A special case of the class of methods we 
propose is asymptotically optimal; the details
concerning such statements have independent
interest, and are summarised in Appendix~\ref{section:semipara_efficiency}. Part of the benefit of using parametric rather than 
nonparametric components in the (\ref{eq:basichazard})
model is that it leads to better precision,
again for both the parametric and nonparametric components; 
this is assessed and illustrated in Appendix~\ref{section:efficiency}. 

Then in Section \ref{section:gof} we construct 
goodness-of-fit monitoring processes, which in 
particular lead to classes of chi-squared tests. 
In Section \ref{section:emilstuff} 
the finite-sample behaviour of our estimation and 
inference methods is illustrated 
through a simulation study. We also present 
an empirical application, related to $n=312$
primary biliary cirrhosis patients in a double-blind
randomised study, 
comparing our methods to those associated with
the fully nonparametric Aalen estimator. 
These applications illustrate the usefulness of the our methods, 
and showcase the gains in efficiency that are achieved 
by going partly parametric partly nonparametric, 
as opposed to fully nonparametric. 
Our article ends with a list of remarks,
some pointing to further research work, 
in Section \ref{section:concluding}. 

\section{The fully nonparametric and fully parametric cases}
\label{section:generalnonparaandpara}

Here we establish some notation and briefly describe 
the estimators $\tilda A_1,\ldots,\tilda A_r$ in typical use 
for the full nonparametric model, 
in Sections \ref{subsection:classicaalen}--\ref{subsection:optimalnonpara}. 
These will be the basis for fitting the parametric and nonparametric components in later sections. We also go through the natural estimation methods for the special case of \eqref{eq:thehi} where all components
are specified parametrically, in Section~\ref{subsection:fullyparametric}.

We first go through and comment on certain assumptions of convenience,  which will be taken to hold throughout our article. 
\begin{assumptions}{{\rm  
(1) {\it Ergodicity:}
All averages $n^{-1}\sumin\phi(z_i)$ converge to appropriate limits 
as $n$ grows. These limits may be interpreted as means
with respect to the covariate distribution. This assumption
facilitates the mathematical development and makes 
it easier to give precise statements about e.g.~limit
distributions of estimators.
The large-sample theory is, however, developed {\it conditionally} 
on the observed covariate values, so all randomness lies in 
$(T_i,\delta_i)$ given these. 
(2) {\it Finite time window:} 
Individuals are followed over 
a fixed finite time interval, say $[0,\tau]$. 
This is not a restriction in practice. Most results 
may be extended to the case of $\tau=\infty$, 
under appropriate assumptions on 
the censoring mechanism, we shall be content to work
with the finite time horizon, with which the martingale
limit theory works more smoothly and with fewer technicalities.
(3) {\it Independent censoring and finite variances:} 
The censoring mechanism involved, leading to data $(t_i,\delta_i)$,
are not related to the survival mechanism generating the
hazard rates. Furthermore, the $r\times r$ matrix function
$n^{-1}\sumin I(T_i\ge s)z_iz_i^\tr$
tends in probability to a matrix with full rank $r$,
for each $s\in[0,\tau]$. This means in particular that
the censoring distribution does not have a support
strictly smaller than $[0,\tau]$, and also that enough
linearly independent covariate vectors $z_i$ are present
in the risk set at time $s$, with increasing $n$. 
(4) {\it Smooth parametric components:}
The $\alpha_j(s,\theta)$ of (\ref{eq:thehi}) are smooth in
$\theta$, with continuous first order derivatives $\alpha_j^*(s,\theta)$
and second order derivatives $\alpha_j^{**}(s,\theta)$,
for $\theta$ in a neighbourhood around the true parameter $\theta_0$. 
\newsquare}} 
\label{assumptions:remark1}
\end{assumptions} 

\subsection{The general integrated weighted least squares estimators}
\label{subsection:classicaalen} 

The data consist of triples $(t_i,\delta_i,z_i)$ for each 
of $n$ individuals, where $t_i$ is the life-time, 
possibly right-censored, $\delta_i$ an indicator for non-censoring,
and $z_i$ the $r$-dimensional covariate vector, as above. 
Let $N_i(t)=I\{t_i\le t,\delta_i=1\}$
and $Y_i(t)=I\{t_i\ge t\}$ be the counting process and 
at risk indicator for individual $i$, and introduce 
the martingale $M_i(t)=N_i(t)-\int_0^tY_i(s)z_i^\tr\alpha(s)\,\dd s$. 
Then 
\beq
\label{eq:hei11}
\sumin w_i(s)z_i\,\dd N_i(s)=\sumin Y_i(s)w_i(s)z_iz_i^\tr\,\alpha(s)\,\dd s
   +\sumin w_i(s)z_i\,\dd M_i(s), 
\eeq 
the second term here being martingale noise with mean zero.
Here we have allowed certain weight functions $w_i(s)$ 
to be used. They are taken to be pre-visible functions
(their values at time $s$ are known at time $s-$), 
and the most often used choice is that of $w_i(s)=1$. 
Equation (\ref{eq:hei11}) is the motivation behind
\beq
\label{eq:hei12}
\dd\tilda A(s)=G_n(s)^{-1} n^{-1}\sumin w_i(s)z_i\,\dd N_i(s),
  \quad {\rm where}\quad
  G_n(s)=n^{-1}\sumin Y_i(s)w_i(s)z_iz_i^\tr,
\eeq 
with accompanying cumulatives $\tilda A_j(t)=\int_0^t\,\dd\tilda A_j(s)$
for $j=1,\ldots,r$. It is assumed that at least $r$ 
of the $z_i$ at risk at time $s$ are linearly independent, 
so that $G_n(s)$ has full rank. 

These estimators have well-studied properties, 
see~\citet*[Ch.~VI]{ABG08}. 
In particular, large-sample results are available via the calculus of 
counting processes and martingales. We review briefly here results, and introduce notation which will be needed in the development to follow. 
Consider 
\beq
\label{eq:hereisUn}
U_n(t)=n^{-1/2}\sumin\int_0^t w_i(s)z_i\,\dd M_i(s), 
\eeq 
which is a martingale with variance process
$H_n(t)=n^{-1}\sumin\int_0^t Y_i(s)w_i(s)^2\allowbreak 
z_iz_i^\tr\,z_i^\tr\alpha(s)\,\dd s$. 
It follows from the regularity conditions described
in Assumptions \ref{assumptions:remark1} that there are 
well-defined limits in probability, 
\beqn
G_n(t)\arr_\pr  G(t) \quad {\rm and} \quad H_n(t)\arr_\pr  H(t), 
\eeqn 
as $n$ increases, where $G$ and $H$ are full-rank 
$r\times r$ matrix functions. One finds
\beq
\label{eq:hei13}
\rootn\{\dd\tilda A(s)-\dd A(s)\}
  =G_n(s)^{-1}\,\dd U_n(s)\arr_d G(s)^{-1}\,\dd U(s),
\eeq 
where $U$ is a Gau{\ss}ian martingale with variance level 
$\Var\,\dd U(s)=\dd H(s)$. In particular, 
$\rootn\{\tilda A(t)-A(t)\}\arr_d\int_0^t G(s)^{-1}\,\dd U(s)$,
which has variance $\int_0^t G(s)^{-1}\,\dd H(s)\,G(s)^{-1}$. 
This limiting variance may be estimated from data as 
$\int_0^t G_n(s)^{-1}\,\dd\hatt H_n(s)\,G_n(s)^{-1}$. 
There are a couple of options for estimating $\dd H(s)$
consistently, including  
\beqn
\dd\hatt H_n(s)
  =n^{-1}\sumin Y_i(s)w_i(s)^2z_iz_i^\tr\,z_i^\tr\dd\tilda A(s)
  {\rm\ \ and\ \ } 
  \dd\hatt H(s)=n^{-1}\sumin w_i(s)^2z_iz_i^\tr\,\dd N_i(s). 
\eeqn 
In our empirical work we have used the second option. 


\subsection{Optimal nonparametric estimation}
\label{subsection:optimalnonpara}

One may show, e.g.~exploiting a parallel to the theory of 
weighted least squares, that the theoretically optimal 
weights, minimising $G_n(s)^{-1}\dd H_n(s)G_n(s)^{-1}$, are 
\beq
\label{eq:optimalwi}
w_i^0(s)=1/\{z_i^\tr\alpha(s)\}
  \quad {\rm for\ }i=1,\ldots,n. 
\eeq 
The resulting minimum variance corresponds to 
$F_n(s)^{-1}\,\dd s$, where 
\beq
\label{eq:Fns}
F_n(s)=n^{-1}\sumin Y_i(s){z_iz_i^\tr\over z_i^\tr\alpha(s)}. 
\eeq 
In practice one needs to estimate these, say with
$\tilda w_i(s)=1/ \{z_i^\tr\tilda\alpha(s)\}$, leading to 
\beqn
\breve{A}(t)=\int_0^t 
   \Bigl\{n^{-1}\sumin Y_i(s)\tilda w_i(s)z_iz_i^\tr\Bigr\}^{-1}
   n^{-1}\sumin\tilda w_i(z)z_i\,\dd N_i(s). 
\eeqn 
One may show that $\rootn(\breve{A}-A)$, with estimated
optimal weights, has the same limit distribution 
$\int_0^. F(s)^{-1}\,\dd U(s)$ as has $\rootn(\tilda A-A)$ 
with optimal weights, provided the $\tilda\alpha(s)$
estimator satisfies certain uniform consistency conditions,
see \citet{HufferMcKeague91}. Candidates for $\tilda\alpha(s)$
include kernel smoothing of the plain Aalen estimators,
which use $w_i(s)=1$, and local linear likelihood smoothing. 
The limit distribution variance for this optimal 
$\breve{A}$ estimator is $\int_0^tF(s)^{-1}\,\dd s$,
which is the minimum over all $\int_0^tG(s)\,\dd H(s)\,G(s)^{-1}$.
Here $F(s)$ is the limit in probability of $F_n(s)$
of (\ref{eq:Fns}), assumed to exist. 

While $F(s)^{-1}\,\dd s$ may be somewhat smaller in size than 
the most often used $G(s)^{-1}\allowbreak\dd H(s)G(s)^{-1}$, 
with weights $w_i(s)=1$, there are additional variability
contributions associated with this estimator, which therefore
is not automatically better than the Aalen ones for finite $n$. 
Our default choice, for empirical work, is therefore 
to use the `plain weights' $w_i(s)=1$ in (\ref{eq:hei12}). 

\subsection{The fully parametric model}
\label{subsection:fullyparametric} 

Consider now the fully parametric model where 
$\alpha_j(s)=\alpha_j(s,\theta)$ for $j=1,\ldots,r$. 
We study the maximum likelihood estimator $\hatt\theta$,
maximising the log-likelihood, which may be written 
\beqn
\ell_n(\theta)=\sumin\int_0^\tau
   \bigl[\log\{z_i^\tr\alpha(s,\theta)\}\,\dd N_i(s)
   -Y_i(s)z_i^\tr\alpha(s,\theta)\,\dd s\bigr]. 
\eeqn 
Here $\tau$ is an upper bound for the period of observation, 
assumed finite, see Assumptions~\ref{assumptions:remark1}. 
Let $\alpha^*(s,\theta)=\dell\alpha(s,\theta)/\dell\theta$ 
be the $r\times m$ matrix of partial derivatives 
$\dell\alpha_j(s,\theta)/\dell\theta_k$,
where $m$ is the length of the parameter vector $\theta$. 
Assuming the model holds, with $\theta_0$ the true parameter value, let  
$\Omega_0=\int_0^\tau\alpha^*(s,\theta_0)^\tr F(s)
   \alpha^*(s,\theta_0)\,\dd s$,
with $F(s)$ the limit in probability of $F_n(s)$ of (\ref{eq:Fns}). 
We then have the following.

\begin{proposition}
\label{prop::linhazpara} 
Under standard regularity conditions,
including those described in Assumptions~\ref{assumptions:remark1},  
and supposing the model holds for a true parameter $\theta_0$,
an inner point of the parameter space, 
$\rootn(\hatt\theta-\theta_0)$ tends to   
$\N_m(0,\Omega_0^{-1})$ in distribution.
\end{proposition}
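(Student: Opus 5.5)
The plan is the standard counting-process likelihood argument of Borgan (1984) and Andersen et al.\ (1993, Ch.~VI), specialised to the linear hazard $z_i^\tr\alpha(s,\theta)$. It has three steps: a martingale central limit theorem for the normalised score, convergence of the normalised observed information, and a Taylor expansion of the score equation.

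\emph{The score.} Differentiating $\ell_n$ gives
\[
n^{-1/2}\dell\ell_n(\theta_0)/\dell\theta
=n^{-1/2}\sumin\int_0^\tau \alpha^*(s,\theta_0)^\tr{z_i\over z_i^\tr\alpha(s,\theta_0)}\,\dd M_i(s).
\]
Here we used $\dd N_i(s)=Y_i(s)z_i^\tr\alpha(s,\theta_0)\,\dd s+\dd M_i(s)$, so the compensator terms cancel exactly. This is a martingale in the upper limit $t$, with integrands that are previsible; they are in fact deterministic given the covariates. Its predictable variation at $\tau$ is
\[
\int_0^\tau \alpha^{*\tr}\Bigl\{n^{-1}\sumin Y_i(s){z_iz_i^\tr\over z_i^\tr\alpha(s,\theta_0)}\Bigr\}\alpha^*\,\dd s
=\int_0^\tau\alpha^{*\tr}F_n(s)\alpha^*\,\dd s\arr_\pr\Omega_0 ,
\]
using Assumptions~\ref{assumptions:remark1}(1)--(3) and dominated convergence on $[0,\tau]$. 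Rebolledo's martingale central limit theorem then gives $n^{-1/2}\dell\ell_n(\theta_0)/\dell\theta\arr_d\N_m(0,\Omega_0)$. For this we also need the Lindeberg condition on the jumps. The jumps are $n^{-1/2}$ times $\alpha^{*\tr}z_i/z_i^\tr\alpha$, so the condition holds provided $z_i^\tr\alpha(s,\theta_0)$ is bounded away from zero on the relevant covariate support and $\alpha^*$ and $z_i$ are bounded. I would state these as the ``standard regularity conditions''.

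\emph{The information.} The second derivative is
\[
-n^{-1}\dell^2\ell_n(\theta)/\dell\theta\dell\theta^\tr
=n^{-1}\sumin\int_0^\tau{\alpha^{*\tr}z_iz_i^\tr\alpha^*\over(z_i^\tr\alpha)^2}\,\dd N_i
\;-\;n^{-1}\sumin\int_0^\tau\Bigl\{{\dd N_i\over z_i^\tr\alpha}-Y_i\,\dd s\Bigr\}z_i^\tr\alpha^{**}.
\]
At $\theta_0$ the first term tends to $\Omega_0$: replace $\dd N_i$ by its compensator, and the martingale remainder vanishes by Lenglart's inequality. The second term equals a martingale divided by $n$, hence is $o_\pr(1)$. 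Using the continuity of $\alpha^{**}$ from Assumption~\ref{assumptions:remark1}(4), the convergence holds uniformly over a shrinking neighbourhood of $\theta_0$.

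\emph{Existence, consistency and expansion.} I expect the main obstacle to be existence and consistency of $\hatt\theta$, not the central limit step. I would handle it by noting that $n^{-1}\{\ell_n(\theta)-\ell_n(\theta_0)\}$ converges in probability to
\[
\int_0^\tau E\bigl[Y(s)\bigl\{z^\tr\alpha_0\log(z^\tr\alpha_\theta/z^\tr\alpha_0)-z^\tr\alpha_\theta+z^\tr\alpha_0\bigr\}\bigr]\,\dd s .
\]
This limit is $\le0$, with equality only at $\theta_0$ under identifiability, and its curvature at $\theta_0$ is $-\Omega_0$ with $\Omega_0$ positive definite by Assumption~\ref{assumptions:remark1}(3). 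I would then apply either a concavity-type argument or the Aitchison--Silvey fixed-point argument on a ball of radius $O(n^{-1/2})$, which yields a consistent root of the score equation with probability tending to one. A Taylor expansion then gives
\[
0=n^{-1/2}\dell\ell_n(\hatt\theta)/\dell\theta=n^{-1/2}\dell\ell_n(\theta_0)/\dell\theta-\{\Omega_0+o_\pr(1)\}\rootn(\hatt\theta-\theta_0),
\]
so $\rootn(\hatt\theta-\theta_0)=\Omega_0^{-1}n^{-1/2}\dell\ell_n(\theta_0)/\dell\theta+o_\pr(1)\arr_d\N_m(0,\Omega_0^{-1}\Omega_0\Omega_0^{-1})=\N_m(0,\Omega_0^{-1})$.
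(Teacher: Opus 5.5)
Your proposal is correct and follows essentially the same route as the paper's proof. The score at $\theta_0$ is a martingale with variance process $J_n=\int_0^\tau\alpha^{*\tr}F_n\alpha^*\,\dd s$, minus the normalised second derivative equals $J_n$ plus martingale noise removed by Lenglart's inequality, and a Taylor expansion yields $\rootn(\hatt\theta-\theta_0)=J_n^{-1}n^{-1/2}u_n(\theta_0)+o_\pr(1)$. Your Lindeberg, uniform-information and existence/consistency steps spell out what the paper leaves to \citet{borgan1984maximum} and \citet{Hjort92}.

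One small correction: positive definiteness of $\Omega_0$ does not follow from Assumption~\ref{assumptions:remark1}(3) alone. It also requires that $\alpha^*(s,\theta_0)v=0$ for almost all $s\in[0,\tau]$ forces $v=0$. This is a local identifiability condition, which the statement implicitly assumes by writing $\Omega_0^{-1}$.
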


\begin{proof}
The proof follows the lines of \citet{borgan1984maximum} 
and \citet{Hjort86, Hjort92}. 
We need the first and second derivatives of 
$z_i^\tr\alpha(s,\theta)$, and write these respectively as 
$\alpha^*(s,\theta)^\tr z_i$, of dimension $1\times m$, 
and $\sum_{j=1}^rz_{i,j} \alpha^{**}_j(s,\theta)$,
where $\alpha^{**}_j(s,\theta)$ is the $m\times m$
matrix of second order derivatives of $\alpha_j(s,\theta)$. 
The first derivative of $\ell_n$ is 
\beqn
u_n(\theta)
=\sumin\int_0^\tau
   \Bigl\{{\alpha^*(s,\theta)^\tr z_i
    \over \alpha(s,\theta)^\tr z_i}\,\dd N_i(s)
   -Y_i(s)\alpha^*(s,\theta)^\tr z_i\,\dd s\Bigr\}. 
\eeqn 
Using the martingales 
$M_i(t)=N_i(t)-\int_0^t Y_i(s)\alpha(s,\theta_0)^\tr z_i\,\dd s$ 
we see that 
\beqn
n^{-1/2}u_n(\theta_0)=n^{-1/2}\sumin\int_0^\tau
   {\alpha^*(s,\theta_0)^\tr z_i\over \alpha(s,\theta_0)^\tr z_i}
   \,\dd M_i(s), 
\eeqn 
which is a martingale, evaluated at $\tau$, with variance process 
\beqn
J_n
&=&n^{-1}\sumin\int_0^\tau
 \Bigl({\alpha^*(s,\theta_0)^\tr z_i\over \alpha(s,\theta_0)^\tr z_i}\Bigr)
 \Bigl({\alpha^*(s,\theta_0)^\tr z_i\over \alpha(s,\theta_0)^\tr z_i}\Bigr)^\tr
 Y_i(s)\alpha(s,\theta_0)^\tr z_i\,\dd s \\
&=&\int_0^\tau\alpha^*(s,\theta_0)^\tr F_n(s)\alpha^*(s,\theta_0)\,\dd s. 
\eeqn 
It follows that $n^{-1/2}u_n(\theta_0)$ tends to a $\N_m(0,\Omega_0)$ random variable, under model conditions. 

We next need to work with the second order derivative 
$i_n(\theta)$ of $\ell_n$, to show that $-n^{-1}i_n(\theta) = J_n + o_{\pr}(1)$ at the model. We find 
\beqn
i_n(\theta) = 
\sumin\int_0^\tau
\Bigl[{\sum_{j=1}^r z_{i,j} \alpha^{**}_j(s,\theta) \alpha(s,\theta)^\tr z_i
   -\{\alpha^*(s,\theta)^\tr z_i\}^2
   \over \{\alpha(s,\theta)^\tr z_i\}^2}\,\dd N_i(s)
   -Y_i(s)\sum_{j=1}^r z_{i,j} \alpha^{**}_j(s,\theta)\,\dd s\Bigr]. 
\eeqn 
Using the martingales again, and simplifying, shows that 
\beqn
-n^{-1}i_n(\theta)=n^{-1}\sumin\int_0^\tau
  {((\alpha^*)^\tr z_i)^2\over \alpha^\tr z_i}Y_i\,\dd s
 +n^{-1}\sumin\int_0^\tau
  \Bigl[{((\alpha^*)^\tr z_i)^2\over (\alpha^\tr z_i)^2}
  -{\sum_{j=1}^r z_{i,j} \alpha^{**}_j\over \alpha^\tr z_i}\Bigr]\,\dd M_i(s). 
\eeqn 
At the true value $\theta_0$, the first term is equal to 
$\int_0^\tau(\alpha^*)^\tr F_n\alpha^*\,\dd s=J_n$,
while the second goes to zero in probability, by an
application of Lenglart{'}s inequality, see e.g.~\citet[p.~86]{ABGK93}. 
Some further analysis, similar in nature 
to material in \citet[Sections 2--3]{Hjort92}, 
leads in the end to $\rootn(\hatt\theta-\theta_0)$ being 
at most $o_\pr(1)$ away from $J_n^{-1}n^{-1/2}u_n(\theta_0)$,
which has the limiting $\N_m(0,\Omega_0^{-1})$ distribution.
\end{proof}

\section{Estimation in the parametric and nonparametric model} 
\label{section:basics}

In this section we describe estimation methods 
for the parametric-nonparametric model (\ref{eq:thehi}).
These involve a Step (a) for estimating the parametric parts,
the $A_{(1)}(t,\theta)$, 
with these also being used in a Step (b) 
for estimating the nonparametric parts. 
In particular, our estimators for these $A_{(2)}(t)$ 
utilise the parametric structure for $A_{(1)}(t,\theta)$,
and are not identical to the direct Aalen estimators $\tilda A_{(2)}(t)$; 
the point is to utilise the parametric knowledge, 
for increased precision. 

\subsection{Estimating the parametric part} 
\label{subsection:paraestimationA}

Our preferred version of Step (a) is as follows.
It is desirable to find values of $\theta$ which makes
the integrated, weighted quadratic form 
\beqn
\int_0^\tau\{\alpha_{(1)}(s,\theta)-\alpha_{(1)}(s)\}^\tr
   V_n(s)\{\alpha_{(1)}(s,\theta)-\alpha_{(1)}(s)\}\,\dd s 
\eeqn 
as small as possible. 
Here $\tau$ is an upper time point, which could be chosen
by convenience for the application at hand, 
while the $V_n(s)$ is a full-rank symmetric $p\times p$ 
matrix weight function. 
This minimisation cannot be directly achieved, 
since the quadratic form depends
on the unknown functions. Upon multiplying out and omitting
the one term which does not involve the parameters, however, 
the empirical version 
\beq
\label{eq:Cn}
C_n(\theta)=\int_0^\tau\alpha_{(1)}(s,\theta)^\tr 
   V_n(s)\alpha_{(1)}(s,\theta)\,\dd s
  -2\int_0^\tau\alpha_{(1)}(s,\theta)^\tr 
   V_n(s)\,\dd\tilda A_{(1)}(s),
\eeq 
emerges. Here $\dd\tilda A_{(1)}(s)$ contains the first $p$ 
components of the nonparametric $\dd\tilda A(s)$ 
of (\ref{eq:hei12}), and we let $\hatt\theta$ be the minimiser 
of the criterion function $C_n(\theta)$. 

Note that the $V_n(s)$ may very well be data-dependent. 
We typically have such in mind where $V_n(s)\arr_\pr  V(s)$ 
for a suitable limit matrix function; see the following 
section, where we also exhibit a particular choice of $V_n(s)$
which leads to optimal performance for large $n$. 
This involves the nontrivial estimates $1/\{z_i^\tr\tilda\alpha(s)\}$,
however, discussed in connection with (\ref{eq:optimalwi})--(\ref{eq:Fns}),
and are often too unstable for small and moderate $n$. 
Our default choice is the simpler 
\beq
V_n(s)=n^{-1}\sumin Y_i(s)z_{i,(1)} z_{i,(1)}^\tr,
\label{eq:Vndefault} 
\eeq
the upper left $p\times p$ block of $n^{-1}\sumin Y_i(s)z_iz_i^\tr$. 
It has a well-defined limit in probability function $V(s)$
by Assumption \ref{assumptions:remark1}. 
For the simplest case of having the parametric hazard components
constant, with $\alpha_{(1,j)}(s,\theta)=\theta_j$ 
for $j=1,\ldots,p$, the above yields 
\beqn
\hatt\theta=\Bigl\{ \int_0^\tau V_n(s)\,\dd s\Bigr\}^{-1} 
   \int_0^\tau V_n(s)\,\dd\tilda A_{(1)}(s). 
\eeqn 
These are the best constants, seen as yielding 
approximations $\hatt \theta_j t$ to the nonparametric 
$\tilda A_{(1,j)}(t)$ for $t\in[0,\tau]$ and $j=1,\ldots,p$,
as also dictated by the choice of the $V_n(s)$ matrix.

With our default weight function in~\eqref{eq:Vndefault}, the estimator $\hatt\theta$ is similar to the estimator proposed by~\citet[Eq.~(2.4), p.~503]{McKeaugeSasieni94}, but not identical to it. To obtain their estimator, \citeauthor{McKeaugeSasieni94} solve a system of equations obtained by appropriately modifying the score function, obtaining an estimating equation linear in $\theta$ (their $\beta$). Similar techniques may be used with more general parametric hazard functions, thus possibly replacing the $C_n(\theta)$ we work with here with a slightly different criterion function.  


\subsection{Backfitting to re-estimate the nonparametric part} 

We now describe a version of Step (b), after Step (a) has 
yielded parametric estimates $\alpha_j(s,\hatt\theta)$ 
for $j=1,\ldots,p$ as above. Consider the nonparametric part 
of equation (\ref{eq:hei11}), that is
\beqn
\sumin w_i(s)z_{i,(2)}\,\dd N_i(s)
&=&\sumin Y_i(s)w_i(s)z_{i,(2)}\{z_{i,(1)}^\tr\alpha_{(1)}(s,\theta)
   +z_{i,(2)}^\tr\alpha_{(2)}(s)\}\,\dd s \\ 
& &\qquad +\sumin w_i(s)z_{i,(2)}\,\dd M_i(s).
\eeqn 
A more precise definition of the martingales involved,
now that work is carried out inside the (\ref{eq:thehi}) 
framework, reads 
\beq
\label{eq:Mit}
M_i(t)=N_i(t)-\int_0^tY_i(s)\{z_{i,(1)}^\tr\alpha_{(1)}(s,\theta_0)
  +z_{i,(2)}^\tr\alpha_{(2)}(s)\}\,\dd s,
\eeq 
with $\theta_0$ the true parameter. 
To utilise the parametric knowledge, so as to reach better
estimation precision for the nonparametric components, 
this encourages using
\beqn
\sumin w_i(s)z_{i,(2)}\{\dd N_i(s)
  -Y_i(s)z_{i,(1)}^\tr\alpha_{(1)}(s,\hatt\theta)\,\dd s\}
  =\sumin Y_i(s)w_i(s)z_{i,(2)}z_{i,(2)}^\tr\,\dd\alpha_{(2)}(s)
   +{\rm noise} 
\eeqn 
to put up
\beq
\dd\hatt A_{(2)}(s)=G_{n,22}(s)^{-1}
   n^{-1}\sumin w_i(s)z_{i,(2)}\{\dd N_i(s)
  -Y_i(s)z_{i,(1)}^\tr\alpha_{(1)}(s,\hatt\theta)\,\dd s\}.
\label{eq:dhattA2}
\eeq 
This defines modified estimators $\hatt A_j(t)$
for $j=p+1,\ldots,p+q$. Here $G_{n,22}(s)$ is the lower $q\times q$ 
submatrix of $G_n(s)$. 

Note that the method outlined here is really a class of 
procedures, in that different weight schemes may be used
in \eqref{eq:dhattA2}, and also different weight functions 
$V_n$ when minimising the $C_n(\theta)$ function to obtain 
the $\hatt\theta$ estimator.
In~\eqref{eq:dhattA2}, we may e.g.~use vanilla weights $w_i(s)=1$, 
or the more sophisticated $\tilda w_i(s)$ 
of Section~\ref{subsection:optimalnonpara}.
An asymptotically optimal scheme is found in the next section. 


\section{\bf Large-sample behaviour and optimality} 
\label{section:largesampletheory} 

Here we demonstrate limiting normality for the estimators
of Section \ref{section:basics}, 
i.e.~$\hatt\theta$ minimising $C_n(\theta)$ of (\ref{eq:Cn})
and $\hatt A_{(2)}(t)$ of (\ref{eq:dhattA2}), 
with precise formulae for the limit distribution variances 
and covariances. Results are derived under model conditions
(\ref{eq:thehi}), with $\theta_0$ denoting the true parameter
for the parametric parts $\alpha_{(1),j}(s,\theta)$
for $j=1,\ldots,p$. 
Let $\alpha^*_{(1)}(s,\theta)$ be the $p\times m$
matrix of first order derivatives 
$\alpha^*_j(s,\theta)=\dell\alpha_j(s,\theta)/\dell\theta$
of the $p$ component functions, 
where $m$ is the length of the full $\theta$ vector. 

\subsection{Large-sample theory for the parametric part} 
\label{subsection:paraestimation}

For studying our estimators we also need the function $Q(s)$, 
defined by 
\beq
Q(s)\,\dd s= [G(s)^{-1}\,\dd H(s)G(s)^{-1}]_{11}, 
\label{eq:hereisQ}
\eeq 
that is, the upper left $p\times p$ block matrix 
of the full $G(s)^{-1}\,\dd H(s)G(s)^{-1}$ matrix,
associated with the variance of the first $p$ components
of the Aalen estimator, i.e.~$\tilda A_{(1)}$; see (\ref{eq:hei13}). 

\def\TT{\Gamma}

\begin{proposition}
\label{proposition:largesamplepara}
Suppose regularity conditions spelled out in
Assumptions \ref{assumptions:remark1} are in force,
and that $V_n(s)\arr_\pr  V(s)$, uniformly over $s\in[0,\tau]$. Then $\Lambda_n=\rootn(\hatt\theta-\theta_0)$, 
under the conditions of the parametric model, 
tends to $\N_m(0,\TT^{-1}\Omega \TT^{-1})$, 
in which 
\beqn
\TT=\int_0^\tau\alpha^*_{(1)}(s,\theta_0)^\tr V(s)
  \alpha^*_{(1)}(s,\theta_0)\,\dd s 
   \quad{\rm and}\quad
  \Omega=\int_0^\tau\alpha^*_{(1)}(s,\theta_0)^\tr V(s)Q(s)V(s)
  \alpha^*_{(1)}(s,\theta_0)\,\dd s. 
\eeqn 
\end{proposition}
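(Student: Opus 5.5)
This is a standard M-estimation argument: Taylor-expand the gradient of $C_n(\theta)$ around $\theta_0$, identify the score as a linear functional of the Aalen process $\rootn(\tilda A_{(1)}-A_{(1)})$, and show that the Hessian converges to $2\TT$.

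\textbf{Step 1: the gradient at $\theta_0$.} Differentiating (\ref{eq:Cn}) gives
\beqn
C_n^*(\theta)=2\int_0^\tau\alpha^*_{(1)}(s,\theta)^\tr V_n(s)\{\alpha_{(1)}(s,\theta)\,\dd s-\dd\tilda A_{(1)}(s)\}.
\eeqn
Under the model, $\alpha_{(1)}(s,\theta_0)\,\dd s=\dd A_{(1)}(s)$. Hence
\beqn
-\tfrac12\rootn\,C_n^*(\theta_0)=\int_0^\tau\alpha^*_{(1)}(s,\theta_0)^\tr V_n(s)\,\rootn\{\dd\tilda A_{(1)}(s)-\dd A_{(1)}(s)\}.
\eeqn
By (\ref{eq:hei13}), the increment $\rootn\{\dd\tilda A_{(1)}-\dd A_{(1)}\}$ equals $[G_n(s)^{-1}\dd U_n(s)]_{(1)}$. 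The expression is therefore $\int_0^\tau K_n(s)\,\dd U_n(s)$, with the predictable $m\times r$ integrand
\beqn
K_n(s)=\alpha^*_{(1)}(s,\theta_0)^\tr V_n(s)\,[I_p\ \ 0]\,G_n(s)^{-1}.
\eeqn
Its predictable variation is $\int_0^\tau K_n\,\dd H_n\,K_n^\tr$. Using $V_n\to V$, $G_n\to G$ and $H_n\to H$ uniformly, this converges in probability to
\beqn
\int_0^\tau\alpha^{*\tr}_{(1)}V\,[G^{-1}\dd H\,G^{-1}]_{11}\,V\alpha^*_{(1)}=\Omega,
\eeqn
by the definition (\ref{eq:hereisQ}) of $Q$. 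A Lindeberg condition holds because jumps of $U_n$ are $O(n^{-1/2})$ and the integrand is bounded on $[0,\tau]$. Rebolledo's martingale central limit theorem then gives $-\tfrac12\rootn\,C_n^*(\theta_0)\arr_d\N_m(0,\Omega)$.

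There is one technicality. $V_n$ may be data-dependent, and is then predictable if built from $Y_i(s)$ as in (\ref{eq:Vndefault}). If it is not predictable, I would replace $V_n$ by $V$ in the integrand and bound the difference
\beqn
\int_0^\tau\alpha^{*\tr}_{(1)}(V_n-V)\,\dd W_n,\qquad W_n=\rootn(\tilda A_{(1)}-A_{(1)}).
\eeqn
Since $W_n$ is tight in $D[0,\tau]$ and $V_n-V\to0$ uniformly, integration by parts handles this when $V_n$ has bounded variation. Otherwise the bound comes from a Lenglart-type inequality applied to $\int(V_n-V)\,\dd U_n$ with a predictable approximation.

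\textbf{Step 2: the Hessian.} The second derivative is
\beqn
C_n^{**}(\theta)=2\int_0^\tau\alpha^{*\tr}_{(1)}V_n\alpha^*_{(1)}\,\dd s+2\int_0^\tau\sum_j[V_n\{\alpha_{(1)}(s,\theta)\,\dd s-\dd\tilda A_{(1)}\}]_j\,\alpha^{**}_j(s,\theta).
\eeqn
The second term is controlled by $\sup_t|\tilda A_{(1)}(t)-A_{(1)}(t)|\to_\pr0$, which follows from Step 1's tightness, together with $\alpha_{(1)}(\cdot,\theta)\to\alpha_{(1)}(\cdot,\theta_0)$ as $\theta\to\theta_0$. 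By continuity of $\alpha^*$ and $\alpha^{**}$ (Assumption 4), $C_n^{**}(\tilda\theta)\to_\pr2\TT$ for any $\tilda\theta\to_\pr\theta_0$.

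\textbf{Step 3: consistency and conclusion.} For consistency I need that $C_n(\theta)$ converges uniformly on compacts to
\beqn
\int_0^\tau\alpha_{(1)}(\theta)^\tr V\alpha_{(1)}(\theta)\,\dd s-2\int_0^\tau\alpha_{(1)}(\theta)^\tr V\alpha_{(1)}(\theta_0)\,\dd s,
\eeqn
which is uniquely minimised at $\theta_0$ provided $\TT$ is positive definite, an identifiability condition I take as part of the standard regularity. Alternatively I would argue locally: $C_n$ is eventually convex near $\theta_0$, and a convexity argument in the style of \citet{Hjort92} yields a consistent local minimiser.

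Expanding $0=C_n^*(\hatt\theta)=C_n^*(\theta_0)+C_n^{**}(\bar\theta)(\hatt\theta-\theta_0)$ then gives
\beqn
\Lambda_n=\TT^{-1}\{-\tfrac12\rootn\,C_n^*(\theta_0)\}+o_\pr(1)\arr_d\N_m(0,\TT^{-1}\Omega\TT^{-1}).
\eeqn

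The main obstacle is Step 1's treatment of a general, possibly non-predictable $V_n$, and the uniform convergence of $G_n^{-1}$ near $\tau$. Assumption 3's full-rank condition on all of $[0,\tau]$ is what I would lean on for the latter.
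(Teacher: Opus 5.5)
Your proposal is correct and follows essentially the paper's own route: $-\tfrac12 C_n^*(\theta)$ is exactly the paper's estimating function $S_n(\theta)$, the score at $\theta_0$ is handled through $\rootn\{\dd\tilda A-\dd A\}=G_n^{-1}\,\dd U_n$ to give $S\sim\N_m(0,\Omega)$, and the Hessian remainder built from $\dd\tilda A_{(1)}-\alpha_{(1)}\,\dd s$ vanishes, so $\Lambda_n\to_d\Gamma^{-1}S$. You go beyond the paper by supplying the predictability and Rebolledo details, the Hessian at intermediate points, and consistency; there, positive definiteness of $\Gamma$ only gives local identifiability, so your local convexity argument is the cleaner way to close that step.
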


\begin{proof}
Setting the derivative of the criterion function~\eqref{eq:Cn} 
equal to zero gives the estimation equation
$S_n(\hatt\theta)=0$, where 
\beqn
S_n(\theta)=\int_0^\tau\alpha^*_{(1)}(s,\theta)^\tr V_n(s)
   \{\dd\tilda A_{(1)}(s)-\alpha_{(1)}(s,\theta)\,\dd s\}. 
\eeqn 
This redefines $\hatt\theta$, under appropriate conditions, 
as an $M$-type estimator;
see \citet[Section 4]{Hjort85}, \citet[Section 5]{Hjort92}. Note that 
\beqn
\rootn S_n(\theta_0)
\arr_d \int_0^\tau\alpha^*_{(1)}(s,\theta_0)^\tr V(s)
   [G(s)^{-1}\,\dd U(s)]_{(1)}=S, 
\eeqn 
which at the true $\theta_0$ is a zero-mean normal 
with variance matrix $\Omega$. 
A little more work gives expressions for the $m\times m$ matrix 
$\TT_n(\theta)$, containing minus the derivative 
of $S_n(\theta)$ with respect to the $m$ parameters, as 
\beqn
\TT_n(\theta)=\int_0^\tau\alpha^*_{(1)}(s,\theta)^\tr V_n(s)
   \alpha^*_{(1)}(s,\theta)\,\dd s
   + E_n(\theta). 
\eeqn
Here the second matrix has components 
which are linear combinations of smooth 
and bounded functions of $\theta$ times the 
$p$ components of  
$\dd\tilda A_{(1)}(s)-\alpha_{(1)}(s,\theta)\,\dd s$,
integrated over $[0,\tau]$. 
The point is that all terms of $E_n(\theta)$ go to zero in probability,
under model conditions, at $\theta_0$, so $\TT_n(\theta_0)\arr_\pr  \TT$.
This leads to $\Lambda_n\arr_d \TT^{-1}S$, proving the claim. 
\end{proof} 

Asking for the best performance under model conditions, 
at least for large $n$, is the same as choosing the 
$p\times p$ matrix function $V$ to minimise $\TT^{-1}\Omega \TT^{-1}$. 
This is achieved when $V(s)$ is taken proportional to $Q(s)^{-1}$,
assuming $Q(s)$ to have full rank $p\times p$ 
across the range $[0,\tau]$. Then $\TT=\Omega=\Omega_0$, say,
with minimum variance matrix being equal to 
\beq
\label{eq:Omega0}
\Omega_0^{-1}=\Bigl\{\int_0^\tau\alpha^*_{(1)}(s,\theta_0)^\tr
   Q(s)^{-1}\alpha^*_{(1)}(s,\theta_0)\,\dd s\Bigr\}^{-1}. 
\eeq 
To prove that this is the minimum size matrix, let $Z(t)$ 
be a Gau\ss ian martingale with incremental variance
$\Var\,\dd Z(s)=Q(s)\,\dd s$, and consider the random vectors
$X=\int_0^\tau\alpha^*_{(1)}V\,\dd Z$ and 
$Y=\int_0^\tau\alpha^*_{(1)}Q^{-1}\,\dd Z$. Their 
combined variance matrix is
\beqn
\Sigma=
\begin{pmatrix}
\int_0^\tau(\alpha^*_{(1)})^\tr VQV\alpha^*_{(1)}\,\dd s
 &\int_0^\tau (\alpha^*_{(1)})^\tr V\alpha^*_{(1)}\,\dd s \cr
\int_0^\tau (\alpha^*_{(1)})^\tr V\alpha^*_{(1)}\,\dd s  
 &\int_0^\tau (\alpha^*_{(1)})^\tr Q^{-1}\alpha^*_{(1)}\,\dd s 
\end{pmatrix}. 
\eeqn 
In usual block notation, $\Sigma_{11}-\Sigma_{12}\Sigma_{22}^{-1}\Sigma_{21}$
must then be nonnegative definite. This is equivalent to the 
minimisation claim made. 

The next question is how one can make $\Omega_0^{-1}$
as small as possible. But this is the same as minimising
over $Q(s)\,\dd s=[G(s)^{-1}\,\dd H(s)\,G(s)^{-1}]_{11}$, 
which we have seen takes place for 
the optimal weights (\ref{eq:optimalwi}), 
and for which we have $Q(s)=[F(s)^{-1}]_{11}=F^{11}(s)$, say. 
The asymptotically optimal method is accordingly to use
as $V_n(s)$ a matrix function which converges in probability, 
if possible, to $V(s)=F^{11}(s)^{-1}$. But this is achieved via
\beqn
V_n(s)=\tilda F_n^{11}(s)^{-1}
   =\tilda F_{n,11}(s)-\tilda F_{n,12}(s)\tilda F_{n,22}(s)^{-1}
   \tilda F_{n,21}(s), 
\eeqn 
where $\tilda F_n$ is as $F_n$ of (\ref{eq:Fns}), but with 
weights $z_i^\tr\tilda\alpha(s)$ inserted. We may conclude 
that this method gives the optimal performance for large $n$, 
with limit variance matrix 
\beq
\label{eq:integratedalpha}
\Bigl\{\int_0^\tau (\alpha^*_{(1)})^\tr
   (F^{11})^{-1} \alpha^*_{(1)}\,\dd s\Bigr\}^{-1}
=\Bigl\{\int_0^\tau (\alpha^*_{(1)})^\tr
   (F_{11}-F_{12}F_{22}^{-1}F_{21})
   \alpha^*_{(1)}\,\dd s\Bigr\}^{-1}. 
\eeq 
It is in fact not possible to improve on this, 
with any other estimation method. That this is indeed so is detailed in Appendix~\ref{section:semipara_efficiency}. 

\subsection{Large-sample theory for the nonparametric part} 
\label{subsection:largesamplenonpara}

To study the behaviour of $\hatt A_{p+1},\ldots,\hatt A_{p+q}$
we need the $q\times m$ function 
\beqn
\phi_n(s)=n^{-1}\sumin Y_i(s)w_i(s)z_{i,(2)}z_{i,(1)}^\tr
   \alpha^*_{(1)}(s,\theta_0), 
\eeqn 
which under the mild general conditions stated previously
has a limit in probability function $\phi(s)$. 

\begin{proposition}
\label{proposition:limitdistnonparapart}
Assume that regularity conditions of
Proposition \ref{proposition:largesamplepara} are in force,
and let $\Lambda=\TT^{-1}S$ be the limit variable
for $\Lambda_n=\rootn(\hatt\theta-\theta_0)$. 
Then there is process convergence 
\beq
\rootn\{\hatt A_{(2)}(t)-A_{(2)}(t)\}
   \arr_d\int_0^t G_{22}(s)^{-1}\,\dd U_{(2)}(s)
   -\int_0^t G_{22}(s)^{-1}\phi(s)\,\dd s\,\Lambda 
\label{eq:limitnonparapart}
\eeq 
in the space $D[0,\tau]$ of right-continuous functions
with left hand limits on $[0,\tau]$, equipped with
the Skorokhod topology. 
\end{proposition}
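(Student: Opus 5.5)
Substituting the model into \eqref{eq:dhattA2} and using the martingales \eqref{eq:Mit}, one has $n^{-1}\sumin w_i z_{i,(2)}\,\dd N_i = G_{n,21}(s)\alpha_{(1)}(s,\theta_0)\,\dd s+G_{n,22}(s)\alpha_{(2)}(s)\,\dd s + n^{-1/2}\dd U_{n,(2)}(s)$, where $G_{n,21}(s)=n^{-1}\sumin Y_iw_iz_{i,(2)}z_{i,(1)}^\tr$. Hence the exact decomposition
\beqn
\rootn\{\hatt A_{(2)}(t)-A_{(2)}(t)\}
 =\int_0^t G_{n,22}^{-1}\,\dd U_{n,(2)}
 -\int_0^t G_{n,22}(s)^{-1}G_{n,21}(s)\rootn\{\alpha_{(1)}(s,\hatt\theta)-\alpha_{(1)}(s,\theta_0)\}\,\dd s .
\eeqn
By Assumption~\ref{assumptions:remark1}(4), a first-order Taylor expansion gives $\rootn\{\alpha_{(1)}(s,\hatt\theta)-\alpha_{(1)}(s,\theta_0)\}=\alpha^*_{(1)}(s,\theta_0)\Lambda_n+R_n(s)$ with $\sup_s|R_n(s)|\le c\,\rootn|\hatt\theta-\theta_0|^2=o_\pr(1)$, since the second derivatives are bounded near $\theta_0$ on the compact $[0,\tau]$ and $\Lambda_n=O_\pr(1)$ by Proposition~\ref{proposition:largesamplepara}. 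The second term therefore equals $-\int_0^t G_{n,22}^{-1}\phi_n\,\dd s\,\Lambda_n+o_\pr(1)$ uniformly in $t$, and since $G_{n,22}^{-1}\phi_n\to G_{22}^{-1}\phi$ uniformly in probability (Assumption~\ref{assumptions:remark1}(3) gives invertibility of the limit), the deterministic-function factor converges uniformly to $\int_0^t G_{22}^{-1}\phi\,\dd s$.

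The main step is \emph{joint} convergence of the pair $(\int_0^\cdot G_{n,22}^{-1}\,\dd U_{n,(2)},\Lambda_n)$, since both are driven by the same martingale $U_n$. Here I would use the representation from the proof of Proposition~\ref{proposition:largesamplepara}: $\Lambda_n=\TT_n^{-1}\rootn S_n(\theta_0)+o_\pr(1)$, with $\rootn S_n(\theta_0)=\int_0^\tau\alpha^*_{(1)}(s,\theta_0)^\tr V_n(s)[G_n(s)^{-1}\dd U_n(s)]_{(1)}$. Both components are thus stochastic integrals of predictable, uniformly convergent matrix integrands with respect to the single vector martingale $U_n$. Replacing $V_n$, $G_n^{-1}$, $G_{n,22}^{-1}$ by their limits costs $o_\pr(1)$ uniformly, by Lenglart's inequality (the predictable variation of the difference is $\int\|\text{integrand difference}\|^2\dd H_n\to0$). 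Rebolledo's martingale central limit theorem, applied to the $(q+m)$-dimensional martingale $t\mapsto(\int_0^t G_{22}^{-1}\dd U_{n,(2)},\int_0^{t}\alpha^*_{(1)}{}^\tr V[G^{-1}\dd U_n]_{(1)})$, then gives joint convergence in $D[0,\tau]$ to the corresponding integrals of the Gaussian martingale $U$. The Lindeberg condition holds trivially since jumps of $U_n$ are of size $O(n^{-1/2})$ with bounded covariates and weights, and the variation processes converge because $H_n\to H$.

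Finally, the map $(X,\lambda)\mapsto X(\cdot)-\int_0^\cdot G_{22}^{-1}\phi\,\dd s\,\lambda$ is continuous from $D[0,\tau]\times\mathbb{R}^m$ to $D[0,\tau]$, because the second term is a continuous function of $t$. The continuous mapping theorem, together with Slutsky for the uniform $o_\pr(1)$ remainders, yields \eqref{eq:limitnonparapart} with $\Lambda=\TT^{-1}S$ defined on the same Gaussian martingale $U$. This automatically encodes the covariance between the two limit terms. I expect the only delicate point to be the uniform control of the remainders and the predictability needed for Lenglart (for data-dependent $V_n$ and weights, one must assume they are predictable or handle them through uniform convergence); the rest is routine.
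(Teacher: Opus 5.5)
Your proposal is correct and takes the same route as the paper. The paper's "some algebra" is your exact decomposition of $\rootn\{\hatt A_{(2)}-A_{(2)}\}$ into $\int_0^t G_{n,22}^{-1}\,\dd U_{n,(2)}$ minus the Taylor-expanded $\int_0^t G_{n,22}^{-1}\phi_n\,\dd s\,\Lambda_n$ term, after which the paper simply appeals to ``general theory of convergence of processes in $D[0,\tau]$''. Your treatment of the joint convergence spells out what that appeal leaves implicit: you write $\Lambda_n=\Gamma^{-1}\rootn S_n(\theta_0)+o_\pr(1)$ as a stochastic integral against the same $U_n$, then apply Lenglart, Rebolledo and continuous mapping. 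This is also what justifies the covariance computation in \eqref{eq:covar1}.
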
 

\begin{proof}
Some algebra, starting with (\ref{eq:Mit}) and (\ref{eq:dhattA2}), 
shows that 
\beqn
\dd\hatt A_{(2)}(s)
&=&G_{n,22}(s)^{-1}
n^{-1}\sumin w_i(s)z_{i,(2)}[\dd M_i(s)
   +Y_i(s)z_{i,(2)}^\tr\alpha_{(2)}(s)\,\dd s \\
& &\qquad\qquad 
   -Y_i(s)z_{i,(1)}^\tr
    \{\alpha_{(1)}(s,\hatt\theta)-\alpha_{(1)}(s,\theta_0)\}\,\dd s], 
\eeqn 
which leads to 
\beqn
\rootn\{\dd\hatt A_{(2)}(s)-\alpha_{(2)}(s)\,\dd s\}
 \doteq G_{n,22}(s)^{-1}
 \Bigl\{n^{-1/2}\sumin w_i(s)z_{i,(2)}\,\dd M_i(s)
  -\phi_n(s)\,\dd s\,\Lambda_n\Bigr\}. 
\eeqn 
Here $X_n\doteq X_n'$ means that the difference
tends to zero in probability. 
The claim follows from general theory of 
convergence of processes in the $D[0,\tau]$ space.
\end{proof}

Propositions~\ref{proposition:largesamplepara} and 
\ref{proposition:limitdistnonparapart} give clear descriptions
of the large-sample behaviour of our parametric and 
nonparametric estimators, separately. 
We also need the joint limiting distribution of 
$\hatt\theta$ and $\hatt A_{(2)}(t)$, 
for reaching inference for quantities involving 
both parts, like the survival curves $S(t\midd z)$ with 
$A(t\midd z)=z_1A_1(t,\theta)+\cdots+z_pA_p(t,\theta)
   +z_{p+1}A_{p+1}(t)+\cdots+z_{p+q}A_{p+q}(t)$.
Here we give details for the joint limiting distribution
of $A_{(1)}(t,\hatt\theta)$ and $\hatt A_{(2)}(t)$.
We indeed have
\beq
\rootn
\begin{pmatrix}
A_{(1)}(t,\widehat{\theta}) - A_{(1)}(t,\theta_0)\\
\widehat{A}_{(2)}(t) - A_{(2)}(t)
\end{pmatrix}
\overset{d}\to {\rm N}\big(0,  
\Xi(t)\big),\quad\text{with}\quad \Xi(t) = \begin{pmatrix}
\Xi_{11}(t) & \Xi_{12}(t)\\
\Xi_{21}(t) & \Xi_{22}(t)
\end{pmatrix},
\label{eq:jointlimit}
\eeq 
with formulae for the variance matrix to follow. 

In~\eqref{eq:limitnonparapart}, the first term is a Gau{\ss}ian martingale 
with variance $\int_0^t G_{22}^{-1}\dd H_{22}G_{22}^{-1}$,
while the second term also is normal, with a variance
which can be written down via Proposition~\ref{proposition:largesamplepara}. 
By combining Propositions~\ref{proposition:largesamplepara} 
and \ref{proposition:limitdistnonparapart}, and applying the delta method, 
we reach (\ref{eq:jointlimit}). 
First, $\Xi_{11}(t) = A^{*}(t,\theta_0)\Gamma^{-1}\Omega
\Gamma^{-1}A^{*}(t,\theta_0)^{\tr}$, 
with $A^{*}(t,\theta)$ being the $p \times m$ matrix with components 
$\partial A_{j}(t,\theta)/\partial\theta$, for $j = 1,\ldots,p$.
Second, $\Xi_{22}(t)$ is the variance of~\eqref{eq:limitnonparapart}. 
To this end, for the covariance between the two terms 
in~\eqref{eq:limitnonparapart}, we have 
\beq
\begin{array}{rcl}
\E\Bigl(\int_0^t G_{22}^{-1}\dd U_{(2)}\Bigr)S^\tr
&=&\displaystyle 
   \E\Bigl(\int_0^t G_{22}^{-1}\dd U_{(2)}\Bigr)
     \int_0^\tau(\dd U_{(1)}^\tr G^{11}+\dd U_{(2)}^\tr G^{21})
   V\alpha^*_{(1)} \\
&=&\displaystyle 
   \int_0^tG_{22}^{-1}(\dd H_{21}G^{11}+\dd H_{22}G^{21})
   V\alpha^*_{(1)},
\end{array} 
\label{eq:covar1}  
\eeq 
so that the full variance of the right hand side 
of~\eqref{eq:limitnonparapart} is 
\beqn
\Xi_{22}(t) & = & \int_0^t G_{22}(s)^{-1}\dd H_{22}(s)G_{22}(s)^{-1}
+ \int_0^t G_{22}(s)^{-1}\phi(s)\,\dd s\,
\Gamma^{-1}\Omega\Gamma^{-1}\big(\int_0^t G_{22}(s)^{-1}\phi(s)\,\dd s\big)^{\tr} \\
& & - 2
\int_0^tG_{22}(s)^{-1}\{\dd H_{21}(s)G^{11}(s)+\dd H_{22}(s)G^{21}(s)\}
   V(s)\alpha^*_{(1)}(s)\Gamma^{-1} 
\big(\int_0^t G_{22}(s)^{-1}\phi(s)\,\dd s\big)^{\tr}.
\eeqn 
Third, the lower off-diagonal block in the covariance matrix 
in~\eqref{eq:jointlimit} is 
\beqn
\Xi_{21}(t) & = &
   \E\, \int_0^t G_{22}^{-1}\,\dd U_{(2)} S^{\tr}\Gamma^{-1}\,A^{*}(t,\theta_0)^{\tr} 
- \E\, \int_0^t \phi(s)\,\dd s\,\Gamma^{-1}SS^{\tr} 
   \Gamma^{-1}\,A^{*}(t,\theta_0)^{\tr} \\
& = &\int_0^t G_{22}^{-1}(\dd H_{21}G^{11} + \dd H_{22}G^{21})V \alpha_{(1)}^{*}\,
   \Gamma^{-1}\,A^*(t,\theta_0)^\tr
- \int_0^t \phi(s)\,\dd s \,\Gamma^{-1} \Omega 
   \Gamma^{-1}\,A^{*}(t,\theta_0)^{\tr},
\eeqn 
where we use~\eqref{eq:covar1}, and $\Xi_{12}(t) = \Xi_{21}(t)^{\tr}$. 
It is clear how to estimate these covariance matrices, 
for example, when the traditional Aalen estimator weights 
$w_i(s)=1$ are being used. 

It is interesting to study the special case where 
$V(s)=F^{11}(s)^{-1}$, which by the above leads to
optimal large-sample performance. Then the two
terms of the limit process are in fact independent.
This follows from $\dd H(s)=F(s)\,\dd s$ and $G=F$. 
For this situation, therefore, the covariance 
function for the limit process in (\ref{eq:limitnonparapart}) 
may be written 
$$\int_0^{t_1\wedge t_2}F_{22}(s)^{-1}\,\dd s
   +J(t_1)\Omega_0^{-1}J(t_2)^\tr,$$
where $J(t)=\int_0^tF_{22}^{-1}\phi\,\dd s$
and $\phi$ is the limit of 
\beqn
\phi_n(s)=n^{-1}\sumin Y_i(s)z_{i,(2)}
   {z_{i,(1)}^\tr\alpha^*_{(1)}(s,\theta_0)
    \over z_{i,(1)}^\tr\alpha_{(1)}(s,\theta_0)
          +z_{i,(2)}^\tr\alpha_{(2)}(s)}. 
\eeqn 

\section{Assessing goodness of fit}
\label{section:gof} 

We have investigated the parametric-nonparametric model 
(\ref{eq:thehi}), constructed estimators 
$\alpha_j(s,\hatt\theta)$ for the parametric components, 
and derived large-sample properties, 
leading to inference methods for all relevant quantities,
with better precision than for the traditional 
nonparametric methods. The underlying assumption
for these good results is that the parametric 
structure actually holds. In this section we
construct monitoring processes and related tests
to assess adequacy of the parametric part. 

\subsection{Goodness of fit processes}

For each $j$ we may consider monitoring processes of the type
$\rootn\int_0^t K_{n,j}(s)
   \{\dd\tilda A_j(s)-\alpha_j(s,\hatt\theta)\,\dd s\}$, 
where $K_{n,j}$ is a suitable weight function. More generally, let 
\beq
\label{eq:hereisRn}
R_n(t)=\rootn\int_0^t K_n(s)
   \{\dd\tilda A_{(1)}(s)-\alpha_{(1)}(s,\hatt\theta)\,\dd s\}, 
\eeq 
with a full $p\times p$ matrix of weight functions $K_{n,ij}(s)$. 
These processes can be plotted against time to judge
the adequacy of the parametric modelling assumptions. 
The estimator $\hatt\theta$ used is as in 
Section \ref{subsection:paraestimationA}, 
depending on a matrix weight function $V_n$, 
for which $\Lambda_n=\rootn(\hatt\theta-\theta)$
under model conditions tends to 
$\Lambda=\TT^{-1}S\sim\N_m(0,\Gamma^{-1}\Omega\Gamma^{-1})$, 
as defined and derived in Section \ref{subsection:paraestimation}. 

\begin{proposition}
\label{proposition:monitoring}
Assume that the $K_{n,ij}$ functions are previsible
and converge uniformly in probability to $K_{ij}$ functions
over $[0,\tau]$, that regularity conditions associated 
with the two propositions of Section 4 are in force, 
and that the parametric model is true for the hazard 
functions $\alpha_j(s,\theta)$, for $j=1,\ldots,p$. 
Then there is process convergence in the space $D([0,\tau]^p)$, 
equipped with the Skorokhod product topology, and 
\beqn
R_n(t)\arr_d R(t)=\int_0^t K[G^{-1}\,\dd U]_{(1)}
   -\int_0^t K\alpha^*_{(1)}\,\dd s\,\Lambda. 
\eeqn
\end{proposition}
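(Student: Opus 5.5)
The plan is to split $R_n(t)$ into a martingale part and a parametric-estimation part, by adding and subtracting $\alpha_{(1)}(s,\theta_0)\,\dd s$ inside the integral in (\ref{eq:hereisRn}):
\begin{eqnarray*}
R_n(t)=\int_0^t K_n(s)\,\rootn\{\dd\tilda A_{(1)}(s)-\alpha_{(1)}(s,\theta_0)\,\dd s\}
 -\int_0^t K_n(s)\,\rootn\{\alpha_{(1)}(s,\hatt\theta)-\alpha_{(1)}(s,\theta_0)\}\,\dd s .
\end{eqnarray*}
By (\ref{eq:hei13}) the first integrand equals $[G_n(s)^{-1}\,\dd U_n(s)]_{(1)}$. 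The first term is therefore $\int_0^t K_n[G_n^{-1}\,\dd U_n]_{(1)}$, which is a stochastic integral of the previsible matrix function $K_nG_n^{-1}$ against the martingale $U_n$ of (\ref{eq:hereisUn}). For the second term I would use Assumption (4) and a first-order Taylor expansion around $\theta_0$:
\[
\rootn\{\alpha_{(1)}(s,\hatt\theta)-\alpha_{(1)}(s,\theta_0)\}=\alpha^*_{(1)}(s,\theta_0)\Lambda_n+r_n(s),
\]
with remainder satisfying $\sup_s|r_n(s)|\le c\,\rootn|\hatt\theta-\theta_0|^2$. Here $c$ bounds the second derivatives $\alpha^{**}_j$ on a neighbourhood of $\theta_0$, which is uniform over the compact $[0,\tau]$. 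Since $\Lambda_n=O_\pr(1)$ by Proposition \ref{proposition:largesamplepara}, the remainder is $o_\pr(1)$ uniformly. The second term is thus $\int_0^t K_n\alpha^*_{(1)}\,\dd s\,\Lambda_n+o_\pr(1)$ uniformly in $t$, and uniform convergence $K_n\to K$ lets me replace $K_n$ by $K$.

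Next I need joint convergence of the martingale term and $\Lambda_n$. I would use the representation from the proof of Proposition \ref{proposition:largesamplepara}, namely $\Lambda_n=\TT^{-1}\rootn S_n(\theta_0)+o_\pr(1)$, where $\rootn S_n(\theta_0)=\int_0^\tau\alpha^*_{(1)}(s,\theta_0)^\tr V_n(s)[G_n^{-1}\,\dd U_n]_{(1)}$. This is again a stochastic integral against the same $U_n$. The pair
\[
\Bigl(\int_0^t K_nG_n^{-1}\,\dd U_n,\ \int_0^{t'}(\alpha^*_{(1)})^\tr V_n[G_n^{-1}\,\dd U_n]_{(1)}\Bigr)
\]
is therefore a vector of martingales in $t$ (evaluating the second at $t'=\tau$). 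I would apply Rebolledo's martingale central limit theorem to the stacked vector, as in \citet[Ch.~II]{ABGK93}. The predictable variation and covariation processes converge in probability because $G_n\to G$, $H_n\to H$, $V_n\to V$ and $K_n\to K$ uniformly. The Lindeberg condition holds trivially, since jumps are of size $O(n^{-1/2})$ times bounded weights; this requires $z_i$ bounded and $G_n^{-1}$ bounded on $[0,\tau]$, which follows from Assumption (3) together with uniformity on the compact interval. This yields joint weak convergence of the first term and $\rootn S_n(\theta_0)$ to $\bigl(\int_0^\cdot K[G^{-1}\dd U]_{(1)},S\bigr)$, driven by the same Gaussian martingale $U$.

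Finally, the continuous mapping theorem applied to $(X(\cdot),\Lambda)\mapsto X(\cdot)-\int_0^\cdot K\alpha^*_{(1)}\,\dd s\,\Lambda$ gives the claimed limit $R(t)$. This map is continuous because $t\mapsto\int_0^tK\alpha^*_{(1)}\,\dd s$ is continuous and deterministic, and because the Gaussian martingale limits are continuous, so Skorokhod convergence coincides with uniform convergence there. I expect the main obstacle to be the joint convergence step: making sure $\Lambda_n$ is asymptotically a functional of the same martingale $U_n$ with an $o_\pr(1)$ remainder, so that the two limit terms are correctly coupled rather than converging merely marginally. The second-order control of $\TT_n$ near $\theta_0$ also needs handling, and for this I would lean on the $M$-estimation argument already given for Proposition \ref{proposition:largesamplepara}.
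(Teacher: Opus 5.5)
Your proposal is correct and follows essentially the paper's route. You add and subtract $\alpha_{(1)}(s,\theta_0)\,\dd s$ and Taylor-expand to get $[G_n^{-1}\dd U_n]_{(1)}-\alpha^*_{(1)}(s,\theta_0)\Lambda_n\,\dd s$, obtain joint convergence with $\Lambda_n=\Gamma^{-1}\rootn S_n(\theta_0)+o_\pr(1)$ expressed through the same martingale $U_n$, and then pass from $K_n$ to $K$. Where the paper defers to \citet{Hjort90a}, you spell out the uniform Taylor remainder, Rebolledo's theorem on the stacked martingale, and continuous mapping.

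One small caveat: calling $\int_0^{t'}(\alpha^*_{(1)})^\tr V_n[G_n^{-1}\dd U_n]_{(1)}$ a martingale requires $V_n$ to be previsible. That holds for the default \eqref{eq:Vndefault}, but not automatically for a data-dependent $V_n$ built from a smoothed $\tilda\alpha$; in that case you would first need a separate argument replacing $V_n$ by its limit $V$.
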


\begin{proof}
Letting as before 
$\alpha_j^*(s,\theta)=\dell\alpha_j(s,\theta)/\dell\theta$, 
and using the representation (\ref{eq:hei13}) with 
$\dd U_n(s)$ of (\ref{eq:hereisUn}), the essence here is that 
\beqn
\rootn\{\dd\tilda A_j(s)-\alpha_j(s,\hatt\theta)\}
 \doteq [G_n(s)^{-1}\,\dd U_n(s)]_j
   -\alpha^*_j(s,\theta)^\tr\rootn(\hatt\theta-\theta)\,\dd s 
\eeqn 
for $j=1,\ldots,p$, by Taylor analysis. 
It follows from methods and results 
of Section \ref{section:largesampletheory} that there is 
joint distributional convergence of 
$G_n(s)^{-1}\,\dd U_n(s)$ and $\rootn(\hatt\theta-\theta)$
to $G(s)^{-1}\,\dd U(s)$ and $\Lambda=\Gamma^{-1}S$. 
Let us write $W(t)=\int_0^t [G(s)^{-1}\,\dd U(s)]_{(1)}$,
so that $\dd W(s)=G^{11}(s)\,\dd U_{(1)}(s)+G^{12}(s)\,\dd U_{(2)}(s)$.
We then have 
\beqn
\rootn\{\dd\tilda A_{(1)}(s)-\alpha_{(1)}(s,\hatt\theta)\}
\arr_d \dd W(s)-\alpha_{(1)}^*(s,\theta) \Gamma^{-1} 
\int_0^\tau \alpha_{(1)}^*(s,\theta_0)^\tr V(s)\,\dd W(s). 
\eeqn 
With the weight functions $K_n(s)$ converging uniformly 
in probability to the $K(s)$, we reach $R_n\arr_d R$ 
via details and methods similar to those used 
in \citet[Sections 3--4]{Hjort90a}, for a similar though
somewhat different setup.  
\end{proof}

The limiting processes $R_1,\ldots,R_p$ are jointly Gau\ss ian
with zero mean. To find their covariance functions 
we utilise the structure found in the proof of the proposition.  
The $W$ process is a normal martingale with independent
increments, and $\Var\,\dd W(s)\allowbreak=Q(s)\,\dd s$, 
as before, see (\ref{eq:hereisQ}). Then
\beqn
R(t)=\int_0^t K\,\dd W-\Psi(t)\Lambda,
  \quad {\rm with} \quad 
  \Lambda=\TT^{-1}\int_0^\tau (\alpha^*_{(1)})^\tr V\,\dd W, 
\eeqn
writing also $\Psi(t)$ for the $p\times m$ matrix
$\int_0^tK\alpha^*_{(1)}\,\dd s$. Taking the mean of 
\beqn
R(t_1)R(t_2)^\tr
&=&\int_0^{t_1}K\,\dd W\int_0^{t_1}\dd W^\tr K^\tr
   +\Psi(t_1)\Lambda\Lambda^\tr\Psi(t_2)^\tr \\
& &\qquad   
   -\int_0^{t_1}K\,\dd W\,\Lambda^\tr\Psi(t_2)^\tr
   -\Psi(t_1)\Lambda\int_0^{t_2}\dd W^\tr K^\tr, 
\eeqn 
using the zero-mean independent increments property of $W$, gives 
\beqn
\int_0^ {t_1\wedge t_2}KQK^\tr\,\dd s
   +\Psi(t_1)\TT^{-1}\Omega \TT^{-1}\Psi(t_2)^\tr
   -\Phi(t_1)\TT^{-1}\Psi(t_2)^\tr
   -\Psi(t_1)\TT^{-1}\Phi(t_2)^\tr, 
\eeqn 
where $\Phi(t)$ is the $p\times m$ matrix function 
$\int_0^t KQV\alpha^*_{(1)}\,\dd s$. 

The (\ref{eq:hereisRn}) framework involves a full matrix of weight functions
and gives $p$ processes for simultaneous monitoring. 
We note the special case of a single $p\times 1$ weight 
function $K_n=(K_{n,1},\ldots,K_{n,p})^\tr$,
where a result can be read off from those above, by considering
only one monitoring process. So, the linear combination
of compared increments 
\beqn
R_n^*(t)
&=&\rootn\int_0^t K_n(s)^\tr\{\dd\tilda A_{(1)}(s)
   -\alpha_{(1)}(s,\hatt\theta)\,\dd s\} \\
&=&\rootn\sum_{j=1}^k \int_0^t K_{n,j}(s)\{\dd\tilda A_j(s)
   -\alpha_j(s,\hatt\theta)\,\dd s\} 
\eeqn 
converges in distribution as a process to 
$R^*(t)=\int_0^t K^\tr\,\dd W-\psi(t)\Lambda$, 
where now $\psi(t)=\int_0^tK^\tr\alpha^*_{(1)}\,\dd s$. 

If in particular $K_n=(0,\ldots,K_{n,j},\ldots,0)^\tr$, 
we are led to the separate monitoring processes  
\beq
\label{eq:Rnj}
R_{n,j}(t)=\rootn\int_0^t K_{n,j}(s)
   \{\dd\tilda A_j(s)-\alpha_j(s,\hatt\theta)\,\dd s\}, 
\quad {\rm for\ }j=1,\ldots,p. 
\eeq 
This $R_{n,j}(t)$ tends in distribution to 
\beq
R_j(t)=\int_0^t K_j(s)\,\dd W_j(s)-\psi_j(t)^\tr\Gamma^{-1}S, 
\quad {\rm with} \quad
S=\int_0^\tau (\alpha_{(1)}^*)^\tr V\,\dd W, 
\label{eq:hereisRj}
\eeq 
where $\psi_j(t)=\int_0^tK_j (\alpha^*_j)^\tr\,\dd s$
(of size $m\times 1$). 
With calculations similar to those above,
the covariance function $\cov\{R_j(t_1),R_j(t_2)\}$ 
may be expressed as 
\beq
\int_0^{t_1\wedge t_2} K_j^2Q_{jj}\,\dd s
   +\psi_j(t_1)^\tr \Gamma^{-1}\Omega\Gamma^{-1}\psi_j(t_2)
   -\psi_j(t_1)^\tr \Gamma^{-1}\Phi_j(t_2)
   -\psi_j(t_2)^\tr \Gamma^{-1}\Phi_j(t_1), 
\label{eq:covRj}
\eeq
where 
\beqn
\Phi_j(t)
=\E\int_0^\tau (\alpha_{(1)}^*)^\tr V\,\dd W\int_0^t K_j\,\dd W_j 
=\int_0^t K_j(s)\alpha_{(1)}^*(s,\theta_0)^\tr V(s) Q^{(j)}(s)\,\dd s, 
\eeqn
writing $Q^{(j)}(s)$ for column $j$ of the $p\times p$ matrix $Q(s)$.
Like $\psi_j(t)$, the $\Phi_j(t)$ is of size $m\times 1$. 

\subsection{Chi-squared tests}\label{sec::chisquaretests}

Divide the time observation period $[0,\tau]$ into 
time windows $I_\ell=(c_{\ell-1},c_\ell]$ for $\ell=1,\ldots,k$,
where $c_0=0$ and $c_k=\tau$. For each window we 
may compute the $p$-variate increment 
$\Delta R_n(I_\ell)=R_n(c_\ell)-R_n(c_{\ell-1})$. 
From Proposition \ref{proposition:monitoring}, 
the collection of these tends in distribution to that of 
$\Delta R(I_\ell)=R(c_\ell)-R(c_{\ell-1})$, 
which under the model hypothesis is zero-mean multinormal 
and with a covariance structure which might be calculated 
from the above results. 

We may somewhat grandly test the full simultaneous 
parametric hypothesis that all $\alpha_j(s,\theta)$ 
components hold, via the $p$-dimensional 
$\Delta R_n(I_j)$. Here we outline simpler but 
natural strategies connected to studying one $\alpha_j(s,\theta)$
at the time. For this we use $R_{n,j}(t)\arr_d R_j(t)$, 
as per (\ref{eq:Rnj})--(\ref{eq:hereisRj}), 
for a given choice of weight function $K_{n,j}(s)$. 
We compute increments $\Delta R_{n,j,\ell}=R_{n,j}(I_\ell)$,
and these tend jointly to the vector of 
increments $\Delta R_j(I_\ell)= \Delta\{R_j(c_\ell)-R_j(c_{\ell-1})\}$. 
This is a zero-mean multinormal, say $\N_k(0,\Sigma_j)$,
with $\Sigma_j$ the appropriate covariance matrix 
flowing from the covariance function (\ref{eq:covRj}).
There are several ways in which we may now test 
the $\alpha_j(s,\theta)$ hypothesis. In particular, 
\beq
C_{n,j}=\Delta_{n,j}^\tr \hatt\Sigma_j^{-1}\Delta_{n,j}
   \arr_d C_j=\Delta_j^\tr\Sigma_j^{-1}\Delta_j\sim\chi^2_k, 
\label{eq:chisquared1}
\eeq 
where $\Delta_{n,j}$ is the vector of the $\Delta R_{n,j}$,
tending in distribution to $\Delta_j$, 
the vector of the $\Delta R_j(I_\ell)$, 
and $\hatt\Sigma_j$ a consistent estimator of the 
$k\times k$ matrix $\Sigma_j$. 
\subsection{Other tests}

It is in principle easy to construct other test statistics
based on the monitoring processes $R_n$ of (\ref{eq:hereisRn}), 
although their exact or limiting null distributions 
might be hard to tabulate or assess. There are ways of 
approximating such distributions, however, as we now illustrate. 
Consider $R_{n,j}$ of (\ref{eq:Rnj}), 
for a suitable $K_{n,j}$, and define 
\beqn
\|R_{n,j}\|=\max_{t\le\tau}|R_{n,j}(t)| 
  \quad {\rm for\ }j=1,\ldots,p. 
\eeqn 
These Kolmogorov--Smirnov type tests have well-defined
limit distributions, namely  
$\max_{t\le\tau} |R_j(t)|$ with $R_j(t)$ as in 
(\ref{eq:hereisRj}), a process defined in terms 
of $W(t)=\int_0^t [G(s)^{-1}\,\dd U(s)]_{(1)}$. 
Options for deciding on an upper critical point
in the null distribution of $\|R_{n,j}\|$ include 
the following: 
(i) One may simulate from the limit distribution, 
at the estimated versions of $K_j$, $Q$ and $\alpha^*_j(s,\theta)$.
This can be done with relative ease by simulating $W$ processes, 
via independent normal increments. 
(ii) One may simulate from the $\|R_{n,j}\|$ distribution,
again at its estimated position with respect to $K_j$, 
$Q$, and $\alpha^*_j$,
by simulating full $N_i^*$ and $Y_i^*$ processes from 
the model where the $i$th life-time comes from the 
distribution with integrated hazard rate 
$z_{i,(1)}^\tr A_{(1)}(t,\hatt\theta)+z_{i,(2)}^\tr\hatt A_{(2)}(t)$.
This amounts to semiparametric bootstrapping at the estimated model. 

Note that the above methods also apply to the simultaneous 
test statistic $\sum_{j=1}^k\|R_{n,j}\|$, and relatives thereof. 

\section{Simulations and an application}
\label{section:emilstuff} 

In this section we compare the fully nonparametric linear 
hazard regression model, that is, the Aalen model, 
with the partly parametric partly nonparametric linear
hazard regression model developed in this paper.
First, in Section~\ref{sec::simulations}, this comparison
takes place on simulated data; while
Section~\ref{sec::empiricalanalysis} contains an analysis
of $n = 312$ Primary biliary cirrhosis patients that
participated in a double-blind randomised study
at the Mayo Clinic in the USA between January 1974 and May 1984.
This dataset is contained in the {\sf R} package
\texttt{survival} \citep{therneau2013r}.

\begin{figure} 
\includegraphics[scale=0.48,angle=270]{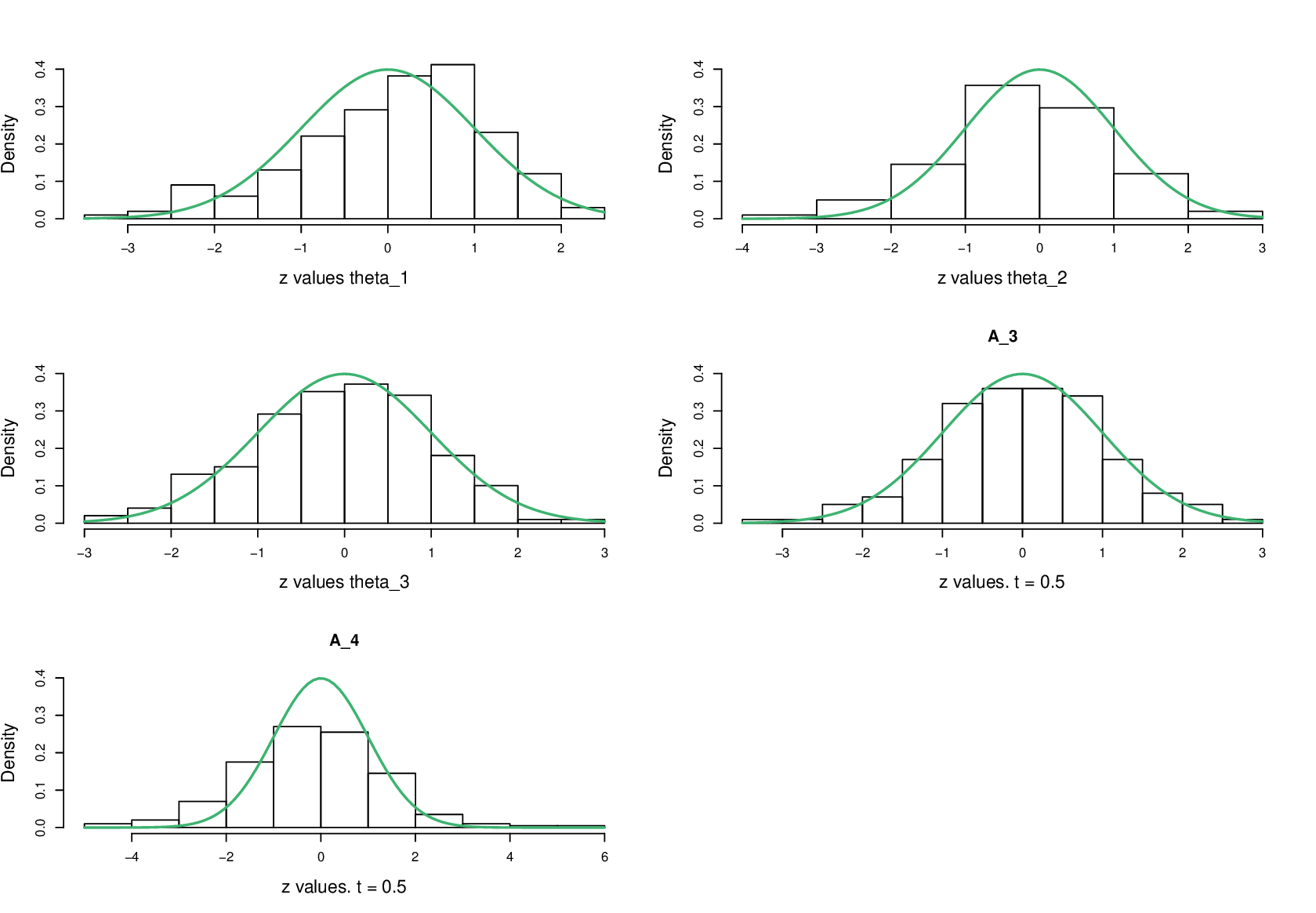}
\caption{Histograms of $\sqrt{n}(\widehat{\theta}_k - \theta_k)/{\rm se}(\widehat{\theta}_k)$ and $\sqrt{n}\{\widehat{A}_j(t) - A_j(t)\}/{\rm se}(\widehat{A}_j(t))$ for $k = 1,2,3$, and $j = 3,4$. The cumulative regressors are evaluated at $t = 0.5$. The sample size was set to $n = 2000$, and the histograms are based on $200$ simulations. The green curves indicate the standard normal density.}
\label{fig::histosims}
\end{figure}

\subsection{Simulations}\label{sec::simulations} 
We simulated $200$ datasets of $n = 2000$ potentially right-censored survival times, with the covariates held fixed across the $200$ simulations (reflecting that the large-sample theory of this paper is developed conditionally on the covariates, see Assumption~\ref{assumptions:remark1}). The true hazard rate of the $i$th individual was taken to be $h_i(t) = \theta_1\theta_2 t^{\theta_2 - 1}z_{i,1} + \theta_3 t z_{i,2} + 0.572\,t^{2 - 1} + 0.123\, t z_{i,4}$, with $\theta_1 = 0.123$, $\theta_2 = 2$, and $\theta_3 = 0.567$. The censoring times were drawn from the uniform distribution on $[0,1]$, resulting in about $55$ percent of the survival times being observed. To each data set we fit the Aalen linear hazard regression model with four regressors, and also a correctly specified partly parametric partly nonparametric model, that is, the model with hazard rate 
\begin{equation}
h_i(t) = \theta_1\theta_2 t^{\theta_2 - 1}z_{i,1} + \theta_3 t z_{i,2} +  \alpha_{3}(t) + \alpha_{4}(t)z_{i,4},
\notag
\end{equation}
meaning that $\alpha_{3}(t)$ is the {`}intercept{'} function. Figure~\ref{fig::histosims} displays histograms of the $z$-values (or Wald statistics) $\sqrt{n}(\widehat{\theta}_k - \theta_k)/{\rm se}(\widehat{\theta}_k)$ for $k = 1,2,3$, and $\sqrt{n}\{\widehat{A}_j(t) - A(t)\}/{\rm se}(\widehat{A}_j(t))$ for $j = 3,4$, the latter evaluated at time $t = 0.5$. The standard errors ${\rm se}(\widehat{\theta}_k)$ and ${\rm se}(\widehat{A}_j(t))$ used to compute these statistics are estimates of the true standard deviations of the estimators. The histograms indicate the with $p = q = 2$ and $m =3$, a rather large sample size is needed for the normality to really kick in for all estimands.   

\begin{figure} 
\includegraphics[scale=0.48,angle=270]{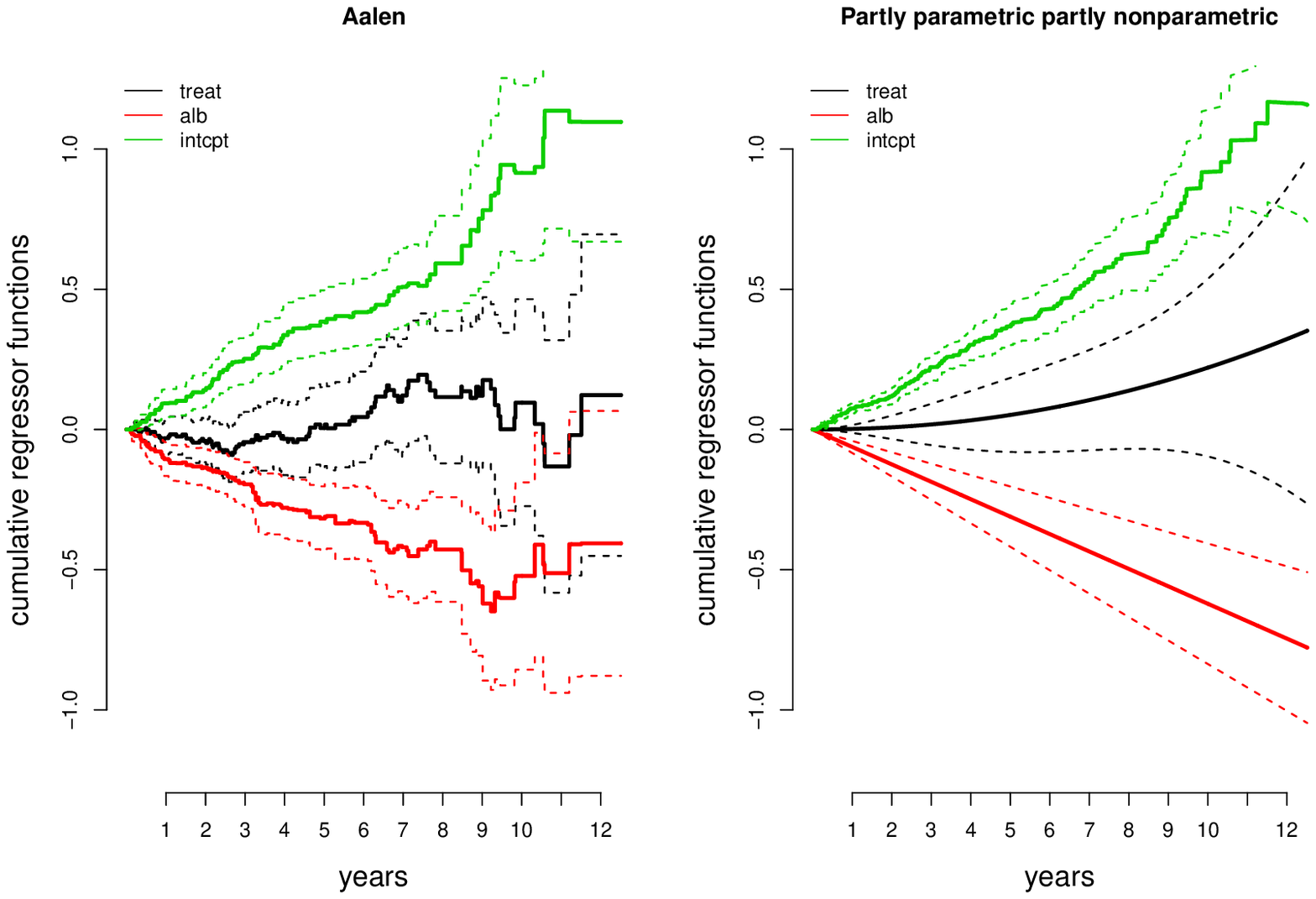}
\caption{Estimates of the cumulative regression functions in~\eqref{eq::emp_model}, fitted to the PBC-data set. The dashed lines indicate pointwise approximate $95$ percent confidence bands.}
\label{fig::empirical}
\end{figure}

\subsection{Empirical application}\label{sec::empiricalanalysis} 
Primary biliary cirrhosis (PBC) is a rare but serious liver disease of unknown origin. Between January~1974 and May~1984, $312$ PBC-patients were included in a double-blind randomized study at the Mayo Clinic in the USA, comparing D-penicillamine with placebo. In our analysis, we have chosen to model the hazard rate of the $i$th patient as 
\begin{equation}
h_i(t) = \alpha_1(t)\,\texttt{treat}_i + \alpha_2(t)\, \texttt{alb}_i
+ \alpha_{3}(t),  
\label{eq::emp_model}
\end{equation} 
where $\texttt{treat}_i$ is an indicator taking the value zero if placebo, and one if D-penicillamine; and $\texttt{alb}_i$ is the concentration of serum albumin (in $\text{g}/\text{dl}$) of the $i$th patient. The covariate $\texttt{alb}_i$ was centred around its mean, and standardised by its standard deviation. We  estimated the cumulatives $A_j(t) = \int_0^t \alpha_j(s)\,\dd s$, both by using the Aalen estimator $\widetilde{A}(t)$ of~\eqref{eq:hei12}; and by parametrising the regression functions $\alpha_1(t)$ and $\alpha_2(t)$ as $\alpha_1(t) = \alpha_1(t,\theta) = \theta_1 \theta_2 t^{\theta_2 - 1}$, and $\alpha_2(t) = \alpha_2(t,\theta) = \theta_3 t$, and using the estimation methods developed in this paper.    

The estimated cumulative regression functions, along with pointwise
approximate $95$ percent confidence bands, are plotted
in Figure~\ref{fig::empirical}. For the parametric cumulative
regressors the confidence bands were obtained by an application
of the delta method, and using Proposition~\ref{proposition:largesamplepara}.
From the two plots in Figure~\ref{fig::empirical}, it is not easy
to see that the confidence bands for the estimators in the partly
parametric partly nonparametric model are more narrow than those
of the Aalen model. In Figure~\ref{fig::cumsd}, therefore,
we have plotted the estimated pointwise standard deviations
for all six estimators of the cumulative regression functions, clearly showing the gains in efficiency. 

In order to make a stab at assessing the goodness of fit
of the parametric functions, Figure~\ref{fig::Rfunc} displays
the $R_{n,1}(t)$ and $R_{n,2}(t)$ functions
of (\ref{eq:Rnj}), as developed in Section~\ref{section:gof}.
In particular, the blue line shows
$\rootn(\hatt A_1(t) - \hatt\theta_1 t^{\widehat{\theta}_2})$,
while the green line shows $\rootn(\tilda A_2(t) - \hatt\theta_3 t)$.
We see that the parametric regressors seem to give a decent fit
for the first eight years in the data, while for the remaining years
the Aalen estimators and the parametric estimates diverge somewhat. 
One should keep in mind, however, that with $n = 312$,
the amount of data we have for these later years is rather limited,
which means increasing variance for $\tilda A_1(t)$ and $\tilda A_2(t)$. 
A formal test for the adequacy of the parametric hazard functions may be carried out using
the apparatus of Section~\ref{sec::chisquaretests}.  

\begin{figure} 
\includegraphics[scale=0.48,angle=270]{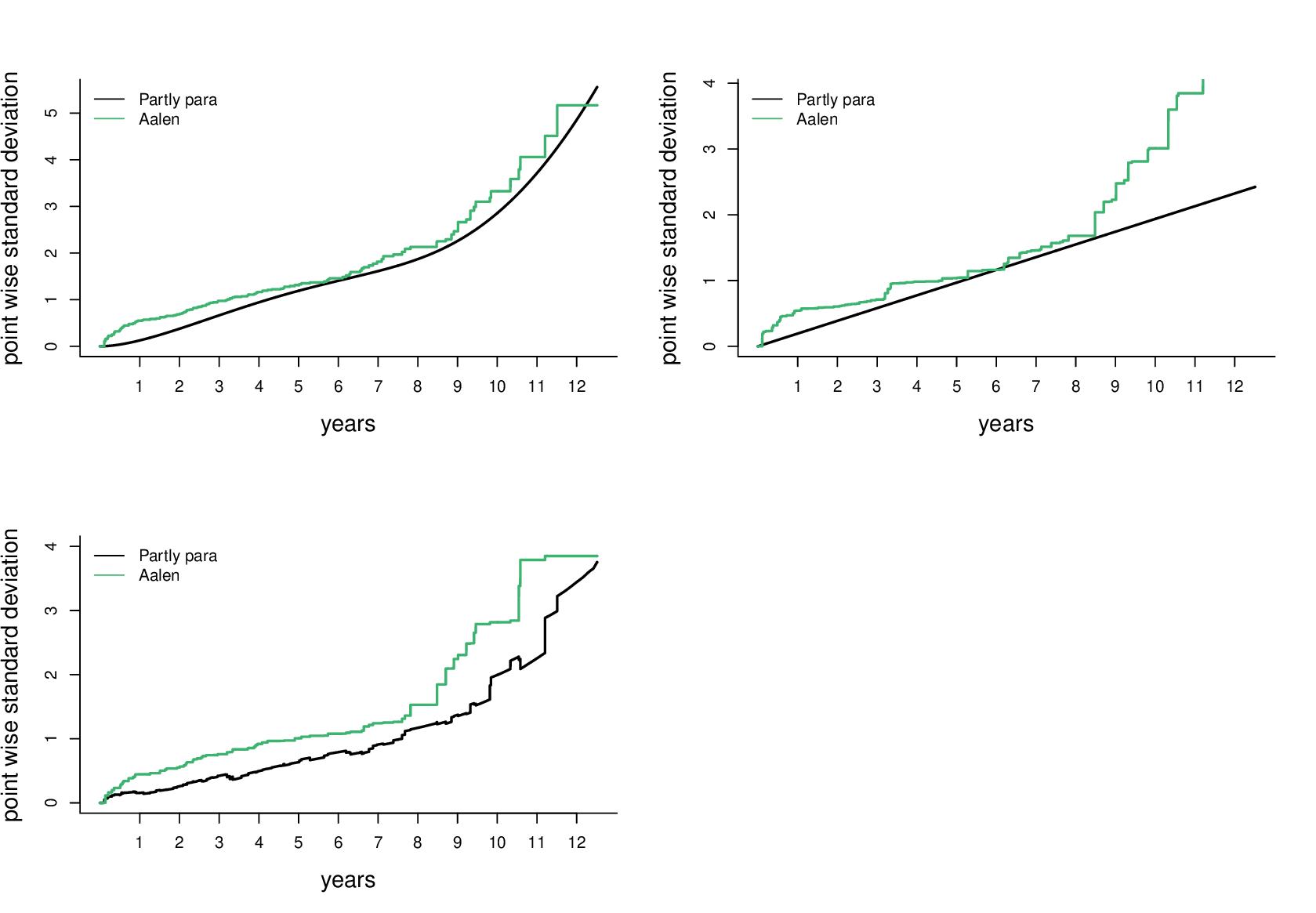}
\caption{Estimated pointwise standard deviations of the estimators $\widetilde{A}_j(t)$ for $j =1,2,3$ of the Aalen model (in green), and $A_1(t,\widehat{\theta})$, $A_2(t,\widehat{\theta})$, and $\widehat{A}_3(t)$, of the partly parametric partly non-parametric model (in black).}
\label{fig::cumsd}
\end{figure}

Figure~\ref{fig::emp_surival} displays the estimated survival curves of an individual, corresponding to the Aalen, and the partly parametric partly nonparametric linear hazard regression model, respectively, along with pointwise approximate $95$ percent confidence bands (see Section~\ref{sec::sec53}). The two survival curves in Figure~\ref{fig::emp_surival} follow each other closely, but the confidence band for the partly parametric partly nonparametric model is always tighter than that corresponding to the Aalen estimator.

\begin{figure} 
\includegraphics[scale=0.48,angle=270]{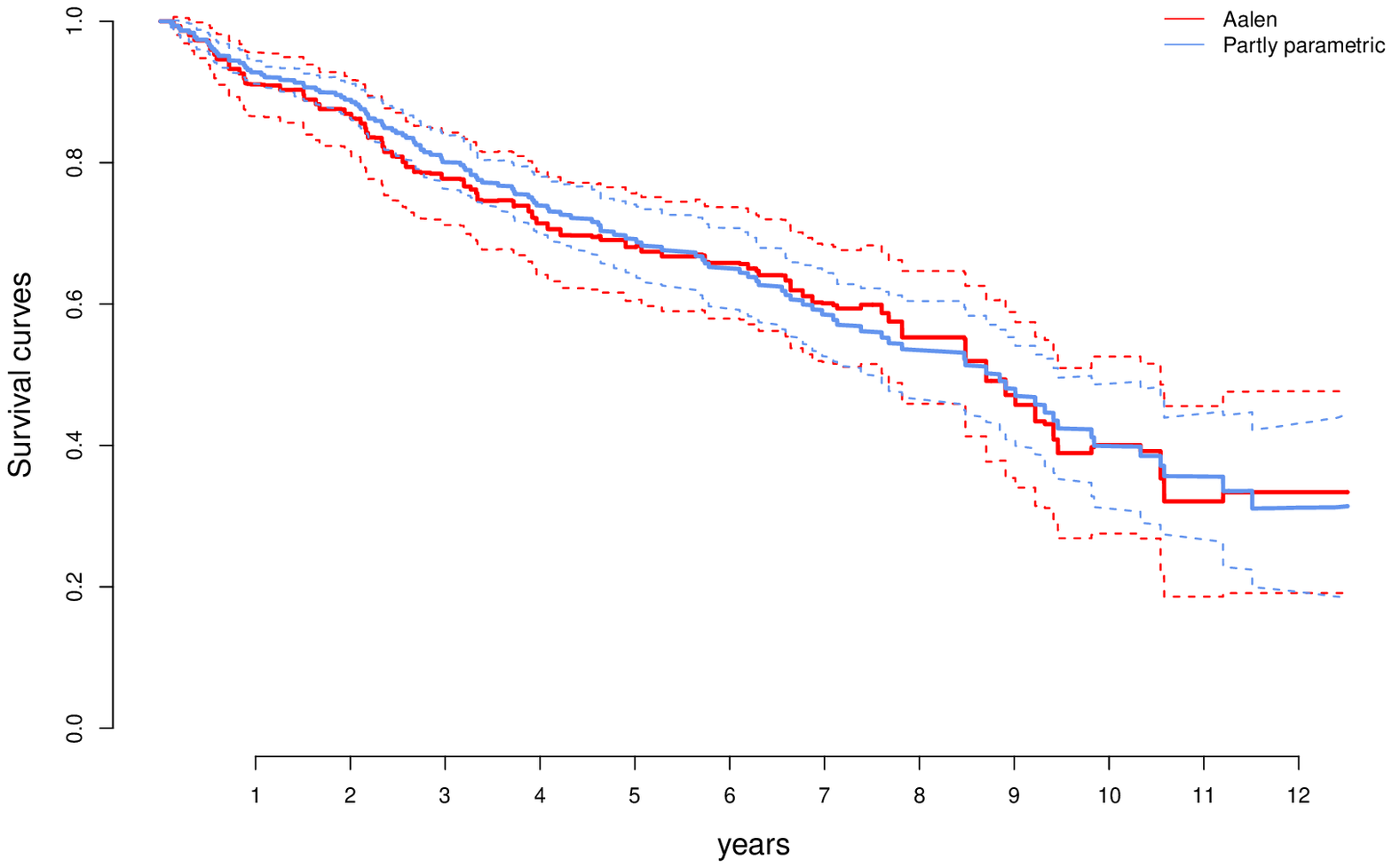}
\caption{The estimated survival curves corresponding to the estimated cumulative regression functions plotted in Figure~\ref{fig::empirical}, for a non-treated individual with $\texttt{alb}_i$ equal to its mean. The dashed lines indicate approximate $95$ percent confidence bands.}
\label{fig::emp_surival}
\end{figure}

\begin{figure} 
\includegraphics[scale=0.48,angle=270]{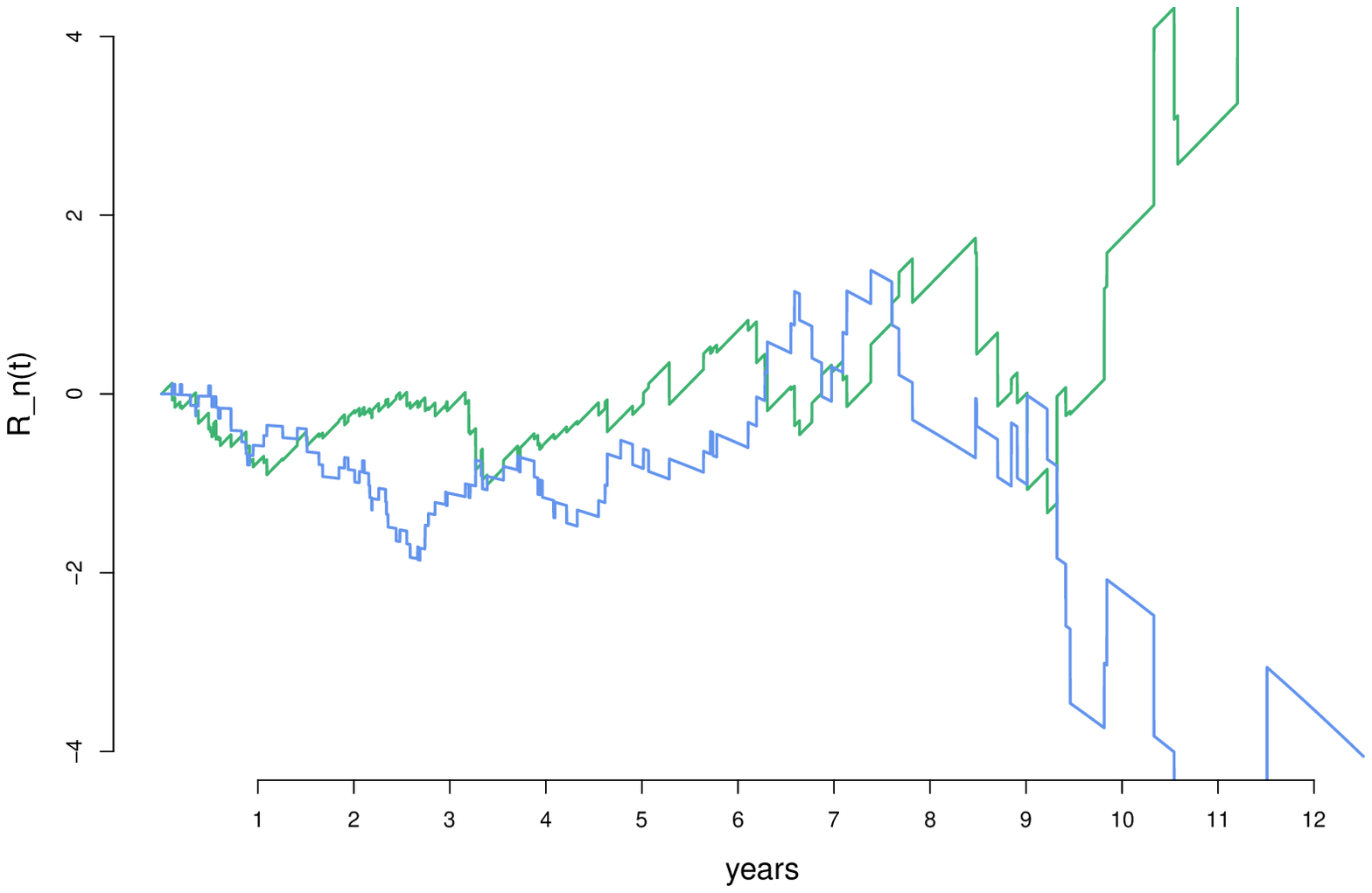}
\caption{The $R_{n,j}(t)$ functions of~\eqref{eq:hereisRn}, with weight functions $K_n(t) = 1$. The blue line shows $\sqrt{n}(\widetilde{A}_1(t) - \widehat{\theta}_1 t^{\widehat{\theta}_2})$, while the green line shows $\sqrt{n}(\widetilde{A}_2(t) - \widehat{\theta}_3 t)$.}
\label{fig::Rfunc}
\end{figure}


\section{Concluding remarks} 
\label{section:concluding} 

We end our article with a list of concluding remarks,
some pointing to further research. 

\smallskip 
{\bf A. Link to GAM.}  
We have investigated parametric-nonparametric 
models for the Aalen hazard function model 
$\sum_{j=1}^r z_j\alpha_j(s)$. There are similarities
to the generalised additive regression models,
where the mean response curve for covariates 
$x_1,\ldots,x_r$ is modelled as 
$\E\,(Y\midd x_1,\ldots,x_r)=f_1(x_1)+\cdots+f_r(x_r)$.
The typical GAM machinery takes the regression functions 
$f_j(x_j)$ functions to be nonparametric, 
where there are several estimation methods;  
see \citet{HastieTibshirani90, Wood17}. 
Methods of the present paper may inspire 
parametric-nonparametric versions of GAM,
with some of the $f_j(x_j)$ modelled parametrically. 

\smallskip
{\bf B. Local power of goodness-of-fit tests.}
The monitoring functions of Section \ref{section:gof},
i.e.~the $R_n(t)$ and $R_{n,j}(t)$, lead as explained
there to classes of goodness-of-fit tests, 
including chi-squared and Kolmogorov--Smirnov type
versions. One may also investigate the local power
of such tests, by extending Proposition \ref{proposition:monitoring}
to the situation where the true 
$\alpha_j(s)$ functions are $O(1/\rootn)$ away 
from parametric $\alpha_j(s,\theta_0)$. Such results
may then be used further for constructing weight functions 
$K_n(s)$ with optimal local power against certain
envisaged alternatives. 

\smallskip
{\bf C. Large-sample behaviour outside model conditions.}
In Section \ref{section:largesampletheory} 
clear limiting normality results have been derived
under model assumptions. These may be extended
to situations where the real underlying hazard function
structure takes the general Aalen form 
$\sum_{j=1}^r z_j\alpha_j(s)$, with the first $p$ 
of the $\alpha_j(s)$ not necessarily being inside 
the parametric models, say $\alpha_j(s,\theta_j)$. 
This involves certain least false parameters 
$\theta_{0,j}$. The benefit of having such more general
outside-model results is partly to construct 
model robust methods for confidence intervals, etc.,
and also for building appropriate model selection strategies. 

\smallskip
{\bf D. FIC for model selection.} 
Our parametric-nonparametric model machinery has been 
developed for a given set of parametric model components,
say $\alpha_j(s,\theta_j)$ for components $j=1,\ldots,p$. 
It would clearly be useful to develop supplementing
model selection methodology, for situations where 
the statistician is not able or willing to decide
a priori which components to take parametric, 
and in that case which parametric structures to use. 
Methods of the AIC and BIC variety cannot be used, 
since there are no likelihood functions. 
One may however develop FIC methods, for the 
Focused Information Criterion; 
see \citet[Ch.~6--7]{ClaeskensHjort08} for a general
discussion. FIC methods along the lines developed
in \citet*{JullumHjort19, ClaeskensCunenHjort19} 
can be constructed in the present setup. The 
start assumption is that the nonparametric Aalen
model holds, for certain unknown $\alpha_j(t)$
for $j=1,\ldots,r$. For a given quantity of interest,
say $\mu=\mu(\alpha_1(\cdot),\ldots,\alpha_r(\cdot))$,
there would be a list of ensuing estimators, say 
$\hatt\mu_M$ for candidate model $M$. The FIC 
would then be an estimator of the mean squared error
for these $\hatt\mu_M$. Carrying out this would 
need large-sample normality results outside 
parametric model conditions, as briefly pointed
to in point D above. 

\smallskip
{\bf E.~Alternative estimation strategies.}  
Our estimators for the parametric-nonparametric model
use for Step (a) minimisation of a certain criterion
function $C_n(\theta)$ of (\ref{eq:Cn}), with the
resulting $\hatt\theta$ also being used in Step (b) for
the nonparametric components. Other strategies
may also be used for Step (a), including
minimising other criterion functions for making
$\alpha_{(1)}(s,\theta)$ come close to the
underlying $\alpha_{(1)}(s)$. Special versions of
such ideas lead to M-type estimators, for which
theory is given in \citet[Section 4]{Hjort85}.
Extending the full theory to estimation of both
$\theta$ and $A_{(2)}(t)$ takes further efforts, however. 

\smallskip
{\bf F.~A parametric-nonparametric cure model.} 
In recent years, cure models have gained much attention. 
See \citet{amico2018cure} for a review, and the references 
therein. These are models for survival times where 
an unknown fraction of the population under study 
is {`}cured{'}, in the sense that the individuals 
belonging to this fraction will never experience the 
event of interest. The population survival curve for 
the (standard) cure model takes the form 
$S_{\rm pop.}(t) = 1 - \pi + \pi S(t)$, where $S(t)$ 
is a proper survival function (that is, $S(t) \to 0$ 
as $t \to \infty$), and $\pi$ is the probability 
of being susceptible to the event of interest. 
Both $S(t)$ and $\pi$ are typically modelled as 
functions of covariates, $S(t) = S(t \midd z)$ 
and $\pi = \pi(x^{\tr}\gamma)$, where $z$ and $x$ 
are potentially different sets of covariates. 
In \citet{stoltenberg2020standard} a cure model 
with a linear hazard regression model {\`a} la Aalen 
is introduced, 
as in \ref{eq:survivalStz} 
the (proper) survival function takes the form 
$S(t\midd z) = \exp\{-z^{\tr}A(t)\}$, 
and estimation methods for the $A_j(t)$ as well 
as the parameters entering $\pi(x^{\tr}\gamma)$ are 
developed. Inspired by the development of the present paper, 
estimation methods and accompanying large-sample 
theory could be developed for the partly parametric 
partly nonparametric cure model, that is, 
a model whose population survival function is
\beqn
S_{\rm pop.}(t;x,z) = 1 - \pi(x^{\tr}\gamma) + \pi(x^{\tr}\gamma) 
\exp\Bigl[-\int_0^t\Big\{\sum_{j=1}^p z_{i,j}\alpha_j(s,\theta) 
   + \sum_{j=p+1}^{p+q}z_{i,j}\alpha_j(s) \Big\}\,\dd s\Bigr],
\eeqn 
with $\pi(a) \colon \real \to [0,1]$ some parametric 
function, for example the logistic one. The unknowns 
of this model, that need to be estimated from the data, 
are the parameter vectors $\gamma$ and $\theta$, as well 
as the nonparametric cumulatives 
$A_{j}(t) = \int_0^t \alpha_j(s)\,\dd s$ for 
$j = p+1,\ldots,p+q$. Estimators for these may be obtained 
by combining the estimators developed in 
\citet{stoltenberg2020standard} with the two-step 
estimation procedure of the present paper.

\appendix

\section{Asymptotic optimality}\label{section:semipara_efficiency}
Consider the family of models whose hazard rates are  
\begin{equation}
h_i(t,\theta,\eta) = z_{i,(1)}^{\tr}\alpha_{(1)}(t,\theta) + z_{i,(2)}^{\tr}\alpha_{(2)}(t,\eta)
= \sum_{j=1}^p z_{i,j}\alpha_{j}(t,\theta) + \sum_{j=1}^{p+q} z_{i,j} \alpha_j(t,\eta), 
\label{eq::paramodell}
\end{equation}
where $\alpha_j(t,\eta) = \sum_{l=1}^K \eta_{j,l} I_{W_l}$, with $W_{l} = [v_{l-1},v_l)$, for an equidistant partition $0 = v_0 < v_1 < \cdots < v_{K-1} < v_{K} = \tau$ of the observational window $[0,\tau]$. We assume that $\alpha_j(s,\theta_j)$ for $j = 1,\ldots,p$, so that $r \geq p$. If $\theta \in \real^r$ say, then \eqref{eq::paramodell} is a $r + qK$ dimensional model. This is now a fully parametric model, so we can use theory from Section~\ref{subsection:fullyparametric}. Until we say otherwise, we are going to assume that the true model is of the form~\eqref{eq::paramodell}, {\it with $K$ held fixed}. The log-likelihood function is $\ell_n(\theta,\eta) = \sum_{i=1}^n\int_0^{\tau}\{\log h_i(s,\theta,\eta) \,\dd N_i(s) - Y_i(s)h_i(s,\theta,\eta)\,\dd s\}$. We split the score function in a $\theta$-part and an $\eta$-part. We have
\begin{equation}
U_{n} = n^{-1/2}\sum_{i=1}^n \int_0^\tau \frac{\alpha_{(1)}^{*}(s,\theta)^{\tr}z_{i,(1)}}{h_i(s,\theta,\eta)}\,\dd M_i(s)
\notag
\end{equation}
which is an $r \times 1$ column vector (with $r$ being the dimension of $\theta$), and where $\alpha_{(1)}^{*}$ is a $p \times r$ matrix containing the partial derivatives $\partial \alpha_{j}(s,\theta_j)/\partial \theta_j$ for $j = 1,\ldots,p$. We also have the score function $V_n = (V_{n,1}^{\tr},\ldots,V_{n,K}^{\tr})^{\tr}$, which is a $qK \times 1$ column vector, where 
\begin{equation}
V_{n,l} = n^{-1/2}\sum_{i=1}^n \int_{W_l} \frac{z_{i,(2)}}{h_i(s,\theta,\eta)}\,\dd M_i(s)\qquad\text{for $l = 1,\ldots,K$},
\notag
\end{equation}
are $q \times 1$ column vectors. Let $J_n$ be the variance process of $(U_n,V_n)$, and let $(\widehat{\theta},\widehat{\eta})$ be the maximum likelihood estimator. Under the conditions of Proposition~\ref{prop::linhazpara}, we know that $\sqrt{n}(\widehat{\theta} - \theta_0,\widehat{\eta} - \eta_{0,K})$ converges in distribution to $\N_{r + qK}(0,\Omega_{K}^{-1})$ as $n \to \infty$, where $\Omega_{K}$ is the limit in probability of $J_n$, and $\theta_0,\eta_{0,K}$ denote the true values of the parameters (under the $K${'}th model). We need to find the limiting distribution of $\widehat{\theta}$ and the estimator for the cumulative $A_2(t,\eta) = \int_0^t \alpha_2(s,\eta)\,\dd s$. Introduce the $(p+q) \times (p+qK)$ matrix function $H_t\colon \real^{p+qK} \to \real^{p+q}$ that is such that $H_t (\theta^{\tr},\eta^{\tr})^{\tr} = (\theta_1,\ldots,\theta_r,A_{p+1}(t,\eta),\ldots,A_{p+q}(t,\eta))^{\tr}$ for each $t$. (The function $H_t$ also depends on $K$, but we suppress this from the notation as it is of little relevance in the following.) An application of the delta-method now yields 
\begin{equation}
\sqrt{n}\{\widehat{\theta} - \theta, A_2(\tau,\widehat{\eta}) - A_2(\tau,\eta_{0,K}) \}
\overset{d}\to \N_{p+q}\{0, H_{\tau}\Omega_{K}^{-1} H_{\tau}^{\tr} \}.
\notag
\end{equation}
(The full process convergence version of this result is not necessary for what we are about to show.) The upper left block of $H_{\tau}\Omega_{K}^{-1}H_{\tau}^{\tr}$ is the $r \times r$ (limiting) variance matrix of $\sqrt{n}(\widehat{\theta} - \theta_0)$. Using the notation  
\begin{equation}
\Omega_{K} = 
\begin{pmatrix}
\Omega_{K,11} & \Omega_{K,12}\\
\Omega_{K,21} & \Omega_{K,22}
\end{pmatrix}
\quad\text{and}\quad
\Omega_K^{-1} = 
\begin{pmatrix}
\Omega_K^{11} & \Omega_K^{12}\\
\Omega_K^{21} & \Omega_K^{22}
\end{pmatrix},
\notag
\end{equation}
the upper left block of $H_{\tau}\Omega_{K}^{-1}H_{\tau}^{\tr}$ is  
\begin{equation}
\Omega_K^{11} = (\Omega_{K,11} - \Omega_{K,12}\Omega_{K,22}^{-1}\Omega_{K,21})^{-1}.
\notag
\end{equation}
By doing the matrix algebra, we see that the matrices inside the parentheses are the $r \times r$ matrix
\begin{equation}
\Omega_{K,11} = \int_0^{\tau}(\alpha_{(1)}^*)^{\tr} F_{11}\alpha_{(1)}^*\,\dd s,
\notag
\end{equation}
the $r \times qK$ matrix 
\begin{equation}
\Omega_{K,12} = \int_0^{\tau}[(\alpha_{(1)}^*)^{\tr}F_{12}I_{W_1} \cdots (\alpha_{(1)}^*)^{\tr}F_{12}I_{W_K}]\,\dd s,
\notag
\end{equation}
and $\Omega_{K,21} = \Omega_{K,12}^{\tr}$, while $\Omega_{K,22}$ is the $qK\times qK$ block diagonal matrix whose blocks are the $q\times q$ matrices 
\begin{equation}
(\Omega_{K,22})_{l} = F_{22}I_{W_l},\quad \text{for $l = 1,\ldots,K$}. 
\notag
\end{equation}
Here, $F_{11},F_{12} = F_{21}$ and $F_{22}$ are the probability limits of $F_{n,11},F_{n,12} = F_{n,21}$ and $F_{n,22}$, respectively, where these latter are the blocks of the matrix $F_n$ defined in~\eqref{eq:Fns}. We now consider $\Omega_K$ as $K\to \infty$, that is, as the interval lengths shrink to zero. Under appropriate conditions on the covariates, on the probability limit of $n^{-1}\sum_{i=1}^n Y_i(s)$, and on the function $s \mapsto \alpha_{(1)}^*$ (e.g.~bounded derivatives), we have that the $l${'}th diagonal block of $\Omega_{K,22}$ is
\begin{equation}
(\Omega_{K,22})_{l} = F_{22}(v_{l-1})K^{-1} + O(K^{-2}), \quad \text{for $l = 1,\ldots,K$},
\notag
\end{equation}  
and, similarly, 
\begin{equation}
\Omega_{K,12} = [(\alpha_{(1)}^*)^{\tr}F_{12}(v_0) \cdots (\alpha_{(1)}^*)^{\tr}F_{12}(v_{K-1})]K^{-1} + O(K^{-2}).
\notag
\end{equation}
We then get
\begin{equation}
\Omega_{K,12}\Omega_{K,22}^{-1}\Omega_{K,21} 
= K^{-1}\sum_{l=1}^K (\alpha_{(1)}^{*}(v_{l-1}))^{\tr}F_{12}(v_{l-1})F_{22}(v_{l-1})^{-1}F_{21}(v_{l-1})\alpha_{(1)}^{*}(v_{l-1}) + O(K^{-2}),
\notag
\end{equation}
which is a Riemann sum converging to $\int_0^{\tau} (\alpha_{(1)}^{*})^{\tr}F_{12}F_{22}^{-1}F_{21}\alpha_{(1)}^{*}\,\dd s$ as $K \to \infty$. In conclusion, $J_n \to_p \Omega_K$ as $n \to \infty$ with $K$ fixed, and 
\begin{equation}
\Omega_{K,11}^{-1} \to \Big\{\int_0^{\tau} (\alpha_{(1)}^*)^{\tr}(F_{11} - F_{12}F_{22}^{-1}F_{21})\alpha_{(1)}^*\,\dd s\Big\}^{-1},
\notag
\end{equation}
as $K \to \infty$. The limit on the right, say $\Omega_{0,11}^{-1}$, is the expression of~\eqref{eq:integratedalpha}. 

Suppose that there is a consistent estimator for $\theta_0$ with smaller variance than $\Omega_{0,11}^{-1}$ under the partly parametric partly nonparametric model in~\eqref{eq:thehi}. Denote its variance matrix by $V$, so that $V < \Omega_{0,11}^{-1}$ (meaning that $V - \Omega_{0,11}^{-1}$ is a negative definite matrix). Since $\Omega_{K,11}^{-1} \leq \Omega_{K+1,11}^{-1}$ for all $K$ and $\Omega_{K,11}^{-1} \to \Omega_{0,11}^{-1}$ as $K \to \infty$, this means that there is a $K_0$ such that $V < \Omega_{K,11}^{-1}$ for all $K \geq K_0$. But $\Omega_{K,11}^{-1}$ is the Cram{\'e}r--Rao lower bound for estimating $\theta_0$ under the $K${'}th parametric model of the form~\eqref{eq::paramodell}, so $V < \Omega_{K,11}^{-1}$ cannot happen, and consequently there cannot be a consistent estimator for $\theta_0$ with smaller variance than $\Omega_{0,11}^{-1}$ under the model in~\eqref{eq:thehi}.


\section{Efficiency and relative improvement calculations} 
\label{section:efficiency} 

There are general benefits from building and using
parametric components models rather than nonparametric ones, 
provided the models can be assessed to check for 
adequacy, a theme addressed in the following section. 
In this section we consider questions related 
to efficiency; how much is gained, in precision,
by using the parametric-nonparametric model 
(\ref{eq:thehi}), compared to the nonparametric Aalen methods? 

\subsection{Asymptotic relative efficiencies}

We have seen in previous sections that various limit 
distributions depend crucially on the limit matrix functions
$F,G,H$ of $F_n,G_n,H_n$, defined 
in Section \ref{section:generalnonparaandpara}, 
along with certain relatives. These functions will
now be studied and compared for a certain setup,
to illustrate also aspects of relative efficiency. 

Assume that the censoring mechanism works independently
of the life-times, with $\rho(s)=\Pr\{C_i\ge s\}$ for 
its survival function. Then  
$\E\{Y_i(s)\midd z_i\}=\exp\{-z_i^\tr A(s)\}\rho(s)$.
With $Y(s)$ and $Z$ denoting generic at-risk indicator
and covariate vector, distributed according the 
covariate distribution in question, we deduce 
\beqn
F(s)
&=&\E\, Y(s){ZZ^\tr\over Z^\tr\alpha(s)}
   =\E\exp\{-Z^\tr A(s)\}{ZZ^\tr\over Z^\tr\alpha(s)}\,\rho(s)
   =F_0(s)\rho(s), \\
G(s)
&=&\E\, Y(s)ZZ^\tr
   =\E\exp\{-Z^\tr A(s)\}ZZ^\tr\,\rho(s)
   =G_0(s)\rho(s), \\
\dd H(s) 
&=&\E\,\exp\{-Z^\tr A(s)\}ZZ^\tr Z^\tr\alpha(s)\rho(s)\,\dd s
   =\dd H_0(s)\rho(s),
\eeqn 
cf.~Assumptions \ref{assumptions:remark1}. 
Assume now that the rates $\alpha_j$ are constant. Then 
\beqn
F_0(s)=\E\,\exp(-sZ^\tr\alpha)ZZ^\tr/(Z^\tr\alpha),
\eeqn 
with derivatives $F_{0,j,k}'(s)=-G_{0,j,k}(s)$,
and the next derivative gives $\dd H_{0,j,k}(s)$. 
Suppose further that the covariates $Z_1,\ldots,Z_r$
are independent with Laplace transforms 
$L_j(u)=\E\,\exp(-uZ_j)=\exp\{-u\psi_j(u)\}$. Then 
\beqn
L_j'(u)&=&-\E\, Z_j\exp(-uZ_j)=-L_j(u)\psi_j'(u), \\
L_j''(u)&=&\E\, Z_j^2\exp(-uZ_j)=L_j(u)\{\psi_j'(u)^2-\psi_j''(u)\}, 
\eeqn 
which leads to 
\beqn 
G_{0,j,k}(s)=\E\,\exp(-sZ^\tr\alpha)Z_jZ_k
   =\Bigl[\prod_{i=1}^r\exp\{-\psi_i(\alpha_is)\}\Bigr]
   \{\psi_j'(\alpha_js)\psi_k'(\alpha_ks)
   -\delta_{j,k}\psi_j''(\alpha_js)\},
\eeqn 
where $\delta_{j,k}$ equals $1$ when $j = k$, and zero otherwise. 
These functions may now be studied and integrated
numerically to give $F$ functions, for different 
scenarios. We shall be content to illustrate this here
for the case where the $Z_j$s have gamma distributions.
Taking $Z_j$ to be gamma $(c_j,\gamma_j)$, 
with Laplace transform $\gamma_j^{c_j}/(\gamma_j+u)^{c_j}$,
one finds $\psi_j(u)=c_j(\gamma_j+u)^{-1}$ and 
$\psi_j''(u)=-c_j(\gamma_j+u)^{-2}$, so that 
$$G_{0,j,k}(s)=\Bigl(\prod_{i=1}^r{\gamma_i\over \gamma_i+\alpha_is}\Bigr)
   \Bigl\{{c_j\over \gamma_j+\alpha_js}{c_k\over \gamma_k+\alpha_ks}
   +\delta_{j,k}{c_j\over (\gamma_j+\alpha_js)^2}\Bigr\}. $$
With the further specialisation that $\alpha_j$s 
are equal to a common $\alpha$, and similarly 
that the $Z_j$s come from the same gamma $(c,\gamma)$ distribution, 
some work leads to 
\beqn
G_0(s)
&=&g(s)(c^{-1}I_r + e_re_r^\tr), 
  \quad 
  {\rm where\ }g(s)={c^2\gamma^{cr}\over (\gamma+\alpha s)^{cr+2}}, \\
F_0(s)
&=&f(s)(c^{-1}I_r + e_re_r^\tr), 
  \quad 
  {\rm where\ }f(s)={c^2\gamma^{cr}\over 
      (cr+1)\alpha(\gamma+\alpha s)^{cr+1}}, \\
\dd H_0(s)
&=&h(s)\,\dd s\,(c^{-1}I_r + e_re_r^\tr), 
  \quad 
  {\rm where\ }h(s)={c^2\gamma^{cr}(cr+2)\alpha\over 
   (\gamma+\alpha s)^{cr+3}}, 
\eeqn 
where $e_r=(1,\ldots,1)^\tr$ of length $r$ and 
$I_r$ is the identity matrix of size $r\times r$. 

Inside this particular setup, with constant hazard rates
and independent covariates, we may now answer various questions
related to relative efficiency. 

\smallskip
(i) How much is precision increased, for large $n$,
by using the $\breve{A}$ estimator with estimated optimal weights 
$\tilda w_i(s)$ instead of the simpler $\tilda A$ estimator with plain weights $w_i(s)=1$ (see Section~\ref{section:generalnonparaandpara})? 
We find 
\beqn
F^{-1}={1\over f\rho}\Bigl(cI_r-{c^2\over 1+cr}e_re_r^\tr\Bigr) 
  \quad {\rm and} \quad 
  G^{-1}\,\dd H\,G^{-1}
   ={h\over g^2\rho}\Bigl(cI_r-{c^2\over 1+cr}e_re_r^\tr\Bigr), 
\eeqn 
the latter function being by inspection 
\beqn
{\rm a.r.e.}={cr+2\over cr+1}={\xi r+2/\gamma\over \xi r+1/\gamma} 
\eeqn 
times bigger than the first, writing $\xi=c/\gamma$ for the 
mean of the $Z_j$s. 
The variance matrices for the limiting
distributions of $A^*$ and $\tilda A$ are the integrals
of these functions, so the asymptotic relative efficiency ratio 
is equal to the same constant. The variance reduction 
may be small, when $c$ or $\gamma$ (for fixed $\xi=c/\gamma$) is large, 
but can be as big as nearly 2, which happens for $c$ small
or $\gamma$ small. 

\smallskip
(ii) How much better are the parametric estimators 
$\hatt\theta_jt$ of $A_j(t)$ than their best nonparametric 
counterparts, under model conditions of constant
rates $\alpha_j(s)=\theta_j$ for $j=1,\ldots,p$? 
The best nonparametric estimators $A^*_{(1)}(t)$ 
have variance 
\beqn
\Var_{\rm nonpm}=\int_0^t(F^{-1})_{11}\,\dd s=\int_0^t{1\over f\rho}\,\dd s\,
   \Bigl(cI_p-{c^2\over 1+cr}e_pe_p^\tr\Bigr). 
\eeqn 
The limit variance matrix of $\rootn(\hatt\theta-\theta)$
is the inverse of $\Omega_0=\int_0^\tau Q^{-1}\,\dd s$,
by Section \ref{subsection:paraestimation}. 
For the best choice of weight functions, 
$Q=(F^{-1})_{11}$ with consequent $Q^{-1}=(F^{11})^{-1}$, 
leading to $\Omega_0$ being $\int_0^\tau f\rho\,\dd s$ 
times $\{cI_p-c^2(1+cr)^{-1}e_pe_p\}^{-1}$. 
The limit variance matrix for the $\hatt\theta_jt$ 
estimators therefore becomes
\beqn
\Var_{\rm pm}=t^2\Omega_0^{-1}={t^2\over \int_0^\tau f\rho\,\dd s}
   \Bigl(cI_p-{c^2\over 1+cr}e_pe_p^\tr\Bigr). 
\eeqn 
In order to reach more concrete comparisons, 
we let the censoring distribution be of the 
shifted Pareto type $\rho(s)=(1+\alpha s/\gamma)^{-k}$,
with density $(k\alpha/\gamma)(1+\alpha s/\gamma)^{-(k+1)}$. 
We also let $\tau=\infty$. 
The distribution is stochastically increasing with 
decreasing $k$, with median equal to $(\gamma/\alpha)(2^{1/k}-1)$.
The case of no censoring corresponds to $k=0$,
while larger $k$ corresponds to more heavy censoring. 
One finds
\beqn 
\int_0^\infty f\rho\,\dd s={c^2\over \alpha^2}
   {1\over (cr+1)(cr+k)}
{\rm\ and\ } 
\int_0^t{1\over f\rho}\,\dd s
  ={cr+1\over cr+k+2}{\gamma^2\over c^2}
   \Bigl\{\Bigl(1+{\alpha\over \gamma}t\Bigr)^{cr+k+2}-1\Bigr\}. 
\eeqn 
The asymptotic inefficiency ratio becomes
$${\Var_{\rm nonpm}\over \Var_{\rm pm}}
  ={1\over (cr+k)(cr+k+2)}{(1+u)^{cr+k+2}-1\over u^2},
  \quad {\rm where\ }u=(\alpha/\gamma)t. $$
Note that the two matrices are simply proportional to each other,
and the ratio is independent of $p$. 

\smallskip
(iii) How much improvement is there for the 
semiparametric estimators $\hatt A_{(2)}$ 
of $A_{p+1},\allowbreak\ldots,A_{p+q}$, constructed
in Section \ref{subsection:largesamplenonpara}, 
compared to the nonparametric $A^*_{(2)}$? The latter ones
have limit distribution variance matrix
$$\int_0^t(F^{-1})_{22}\,\dd s=\int_0^t{1\over f\rho}\,\dd s\,
   \Bigl(cI_q-{c^2\over 1+cr}e_qe_q^\tr\Bigr). $$
This needs to be compared to the (4.5) formula.
It involves $\phi(s)$, which in this situation is
seen to simply be $F_{21}(s)$, so that 
\beqn
J(t)=\int_0^tF_{22}^{-1}F_{21}\,\dd s
   =\int_0^t{1\over f\rho}f\rho\,\dd s
\Bigl(cI_q-{c^2\over 1+cq}e_qe_q^\tr\Bigr)e_qe_p^\tr
   =t{c\over 1+cq}e_qe_p^\tr. 
\eeqn 
The variance matrix formula (4.5) is found to be equal to
\beqn
\int_0^t{1\over f\rho}\,\dd s\Bigl(cI_q-{c^2\over 1+cq}e_qe_q^\tr\Bigr)
   +{t^2\over \int_0^\tau f\rho\,\dd s}
   {c^3p\over (1+cq)(1+cr)}e_qe_q^\tr. 
\eeqn 
With the censoring mechanism given above, one finds
the asymptotic inefficiency ratio, corresponding 
to the nonparametric limit variance of the $A_j$ estimator
divided by the parametric limit variance, is
\beqn
{1+c(r-1)\over 1-cr}v(u)\Big/
   \Bigl\{{1+c(q-1)\over 1-cq}v(u)+c^2\kappa(c,k)u^2\Bigr\}, 
\eeqn 
where again $u=(\alpha/\gamma)t$, 
\beqn
v(u)=(1+u)^{cr+k+2}-1,
  \quad {\rm and} \quad 
  \kappa(c,k)={(cr+k+2)p(cr+k)\over (cq+1)(cr+1)}. 
\eeqn
This ratio can now be studied, as a curve in $t$, 
for different sets of parameters. In certain setups
the precision of the parametric estimator can be
significantly better than the nonparametric one.

\subsection{Efficiency improvements for the plain-weights estimators}

In the previous subsection we have cared for the
particular case of theoretically optimal estimators,
for $\hatt\theta$ and $\hatt A_{j+1},\ldots,\hatt A_{p+q}$.
These involve the use of certain cumbersome optimal weights
$w_i(s)=1/z_i^\tr\tilda\alpha(s)$, an accompanying
optimal $V_n(s)$ matrix when minimising the criterion
function $C_n(\theta)$ of (\ref{eq:Cn}), and so on.
This has led to relatively clear and transpararent
formulae for relevant relative efficiency ratios.

In practice we would be more interested in similar
efficiency ratios for our favoured default choice
$V_n(s)=n^{-1}\sumin Y_i(s)z_iz_i^\tr$, however.
Propositions~\ref{proposition:largesamplepara} 
and \ref{proposition:limitdistnonparapart}
may be used to find limit variance expressions
in this case too, involving
\beqn
n^{-1}\sumin Y_i(s)z_iz_i^\tr
&\arr_p& \E\,\exp\{-Z^\tr A(s)\}ZZ^\tr\,\rho(s), \\ 
n^{-1}\sumin Y_i(s)z_iz_i^\tr z_i^\tr\alpha(s)
&\arr_p& \E\,\,\exp\{-Z^\tr A(s)\}ZZ^\tr\,Z^\tr\alpha(s)\,\rho(s), 
\eeqn 
and yet other quantities; again expectation is
with respect to the ergodic distribution of covariates,
see Assumption \ref{assumptions:remark1}.
These expressions can be evaluated numerically,
along with other required quantities, for given setups
of covariance distributions and $A_j(t)$ functions.
The efficiency ratios do not have clear formulae,
however, so comparisons are harder and less transparent
compared to those in the previous subsection. 
We have carried out some numerical computations,
in simple cases, and found ratios broadly similar
to those reached above.

\subsection{Improvement potential for a given problem}\label{sec::sec53}
For a given dataset, and a given focus parameter 
$\mu=\mu(A_1,\ldots,A_r)$, a statistician
may compute two estimators: 
the $\hatt\mu_{\rm Aalen}=\mu(\tilda A_1,\ldots,\tilda A_r)$
using~\eqref{eq:hei12} to estimate the cumulative regression functions, and 
\beqn
\hatt\mu_{\rm Partly}
  = \mu(A_1(\cdot,\hatt\theta),\ldots,A_p(\cdot,\hatt\theta),\hatt A_{p+1},
   \ldots,\hatt A_{p+q}). 
\eeqn
Crucially, one may also use variance formulae developed
in Section~\ref{section:largesampletheory} to compute their standard errors,
i.e.~estimated standard deviations. For example, a focus parameter of particular 
interest in survival analysis is the survival function evaluated in some fixed time point $t$. 
For an individual with covariates $z_{0} = (z_{0,(1)}^{\tr},z_{0,(2)}^{\tr})^{\tr}$, so that the focus parameter is $\mu = \exp\{-z_0^{\tr}A(t)\}$, we have $\rootn(\widehat{\mu}_{\rm Partly} - \mu) \arr_d
\N(0, \mu^2 z_0^{\tr}\Xi(t)z_0)$, where $\Xi(t)$ is the matrix appearing in~\eqref{eq:jointlimit}. 

A direct comparison of the standard errors of $\hatt\mu_{\rm Aalen}$ and $\widehat{\mu}_{\rm Partly}$ for a given focus parameter allows one to see if there
is a clear gain in going from nonparametric to 
parametric for the $\alpha_j(s)$ components in question.
This is illustrated in Figure~\ref{fig::cumsd}, with proof-of-the-pudding
plots of the standard errors for the two estimators 
of the cumulatives $A_1(t),A_2(t),A_3(t)$, and also exemplified by the survival curve plot of Figure~\ref{fig::emp_surival} where the confidence band of the partly parametric survival curve 
is visibly more narrow than the confidence band corresponding to the Aalen estimator.

\section*{Acknowledgements}
Our work with this article has benefitted from discussions with Ian McKeague and Ingrid Van Keilegom, and we are grateful for constructive comments from two referees.

%




\bibliographystyle{biometrika}
\bibliography{fic_bibliography2020_pluspartly}

\end{document}